\newtheorem{definition}{Definition}[section]
\newtheorem{theorem}[definition]{Theorem}
\newtheorem{lemma}[definition]{Lemma}
\begin{document}
\title{A Novel Construction of Low-Complexity MDS Codes with Optimal Repair Capability for Distributed Storage Systems}
\author{Sheng~Guan,
        Haibin~Kan,~\IEEEmembership{Member,~IEEE,}
        and Xin~Wang,~\IEEEmembership{Senior Member,~IEEE}
\thanks{S. Guan and H. Kan are with the Shanghai Key Laboratory of Intelligent Information Processing, Fudan-Zhongan Joint Lab of Blockchain and Information Security, School of Computer Science, Fudan University, Shanghai 200433, P. R. China (e-mail:hbkan@fudan.edu.cn). }
\thanks{X. Wang is with the School of Information Science and Engineering, Fudan University, Shanghai 200433, P. R. China. }
}
\maketitle
\begin{abstract}
Maximum-distance-separable (MDS) codes are a class of erasure codes that are widely adopted to enhance the reliability of distributed storage systems (DSS). In $(n,k)$ MDS coded DSS, the original data are stored into $n$ distributed nodes in an efficient manner such that each storage node only contains a small amount (i.e., $1/k$) of the data and a data collector connected to any $k$ nodes can retrieve the entire data. On the other hand, a node failure can be repaired (i.e., stored data at the failed node can be successfully recovered) by downloading data segments from other surviving nodes. 
In this paper, we develop a new approach to construction of simple $(5,3)$ MDS codes. With judiciously block-designed generator matrices, we show that the proposed MDS codes have a minimum stripe size $\alpha =2$ and can be constructed over a small (Galois) finite field $\mathbb{F}_4$ of only four elements, both facilitating low-complexity computations and implementations for data storage, retrieval and repair. In addition, with the proposed MDS codes, any single node failure can be repaired through interference alignment technique with a minimum data amount downloaded from the surviving nodes; i.e., the proposed codes ensure optimal exact-repair of any single node failure using the minimum bandwidth. The low-complexity and all-node-optimal-repair properties of the proposed MDS codes make them readily deployed for practical DSS.
\end{abstract}

\begin{IEEEkeywords}
Distributed storage systems, interference alignment, maximum-distance-separable codes, repair bandwidth.
\end{IEEEkeywords}

\IEEEpeerreviewmaketitle

\section{Introduction}

\begin{table*}
\centering
\caption{Comparison of different codes in terms of stripe size, field size, and node-repair capability.}
\begin{tabu} to \hsize {|X|X|X|X|X|X|X|}
\hline
Codes & $(5,3)$ Hadamard MDS codes \cite{Had} & $(5,3)$ Zigzag MDS codes \cite{Zig} & $(5,3)$ MSR codes \cite{Product} & $(6,3)$ MDS codes \cite{Interference} & $(6,4)$ MDS codes \cite{LMDS} & Proposed $(5,3)$ MDS codes \\ \hline
Stripe size & 16 & 4 & 2 & 3 & 2 & 2 \\ \hline
Field size & $\geq 9$ & $\geq 3$ & $\geq 10$ & $\geq 6$ & $\geq 4$ & $\geq 4$ \\ \hline
Optimal repair & Yes & No & Yes & No & No & Yes\\
\hline
\end{tabu}
\end{table*}

To ensure the reliability of distributed storage systems (DSS), erasure codes can be employed to encode the data segments across the storage nodes. Relying on erasure codes, a data collector can reconstruct the entire data in the presence of (multiple) node failures in DSS.
Maximum-distance-separable (MDS) codes are a class of erasure codes. To facilitate low-complexity implementation, systematic MDS codes are widely adopted for DSS. Specifically, if $(n,k)$ systematic MDS codes are used, then the entire raw data of $M$ symbols are divided into $k$ segments of $\alpha=M/k$ symbols. The $k$ (raw) data segments are directly stored in $k$ ``systematic'' nodes, and linear combinations of these $k$ segments are stored in $n-k$ ``parity'' nodes, such that a data collector connected to any $k$ nodes can reconstruct the entire data. In other words, data retrieval for the MDS coded DSS can tolerate no more than $n-k$ node failures. Fig.1 presents an example of $(4,2)$ MDS coded DSS, where it can be readily verified that the entire data can be reconstructed by a data collector connected to every two nodes, and every single node failure can be repaired by a newcomer (i.e., a replacement) downloading data from other three nodes.

\begin{figure}
\centering
\includegraphics[width=2in]{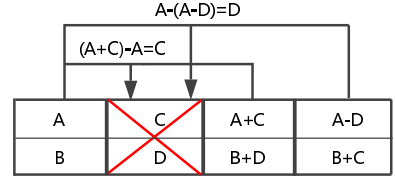}
\caption{An example of $(4,2)$ MDS codes: When node 2 fails, its data can be recovered by downloading data from other three nodes. Specifically, symbol C can be recovered by subtracting symbol A at node 1 from symbol A+C at node 3; and symbol D can be recovered by subtracting symbol A-D at node 4 from symbol A at node 1.}
\end{figure}

Building on MDS codes, Dimakis et al. \cite{Dss} proposed a new class of regenerating codes to achieve the efficient repair of node failures for DSS. Two special cases of regenerating codes, minimum storage regenerating (MSR) codes and minimum bandwidth regenerating (MBR) codes, were developed. It was also shown that the theoretic lower bound of the repair bandwidth (i.e., the total data amount that is required to be downloaded) with MSR codes for repairing single node failures is given by \cite{Dss}
\begin{equation} \label{lower}
    \gamma_{min} = \frac{M}{k} \cdot \frac{n - 1}{n - k}.
\end{equation}
Rashimi et al. \cite{Product} developed a product-matrix framework to construct MSR codes that allow exact-repair of every single node failure using such a minimum bandwidth. The MSR codes in \cite{Product} have a code rate of $k/n \leq k/(2k-1)$ (whose upper bound is close to $1/2$ as $k$ grows large) and need to be constructed over a finite field of size $q \geq n (n-k)$. Y. Wu \cite{Wu} further introduced a construction of functional-repair MDS codes achieving the optimal repair bandwidth, where a part of data is allowed to be changed after repair as long as the ability to retrieve the original data remains. Functional-repair in DSS is usually not preferred in practice since change of data can lead to an increasing data access time. Shah et al. \cite{Interference} relied on interference alignment techniques to construct $(2k,k)$ exact-repair MDS codes. A disadvantage of their codes is that only the systematic nodes can be optimally repaired whereas repair of a parity node requires downloading data of the original size (i.e., $M$ symbols). C. Suh and K. Ramchandran \cite{Exact-IA} improved the construction in \cite{Interference} by using the technique of common eigenvectors, so that their codes possess all-node-repair capability using minimum repair bandwidth. S. Liang et al. \cite{a} also provided a construction of all-node-optimal exact-repair $(6,3)$ MSR codes over the finite field $\mathbb{F}_2$. The code rate of the MSR codes in \cite{Interference, Exact-IA, a} is 1/2, implying that the same amount ($M$ symbols) of ``redundant'' data need to be stored to ensure the reliability of DSS.

To build more efficient DSS, \cite{Had} and \cite{Zig} proposed constructions of $(k+2,k)$ MDS codes with optimal repair bandwidth. Papailiopoulos et al. \cite{Had} used Hadamard matrices to enable perfect interference alignment in repair of node failures. Their codes can be constructed over a finite field of size $q \geq 2k+3$, and ensure that all single node failures can be optimally repaired using minimum bandwidth. On the other hand, Tamo et al. \cite{Zig} employed zigzag codes to ensure exact-repair of systematic nodes optimally, and their zigzag codes can be constructed over a small finite field $\mathbb{F}_3$. Codes in \cite{Had} and \cite{Zig} require the stripe size (i.e., the number of stored data symbols per node) $\alpha = 2^{k+1}$ and $\alpha = 2^{k-1}$, respectively. It was shown in \cite{Stripe_size} that a small stripe size facilitates flexible and low-cost operations of DSS; hence, it is preferred in practice. To this end, \cite{LMDS} developed a construction of $(6,4)$ MDS codes with the minimum stripe size $\alpha = 2$, whereas \cite{Interference} also derived a construction of $(6,3)$ MDS codes with $\alpha=3$. As with the codes in \cite{Zig}, these two constructions can only ensure optimal repair of systematic nodes.

In this paper, we develop a novel construction of $(5,3)$ MDS codes with the minimum stripe size $\alpha = 2$. With judiciously block-designed generator matrices, we show that the proposed MDS codes can be constructed over a small (Galois) finite field $\mathbb{F}_4$ of only four elements. In addition, we rigorously establish that the proposed MDS codes can ensure the repair of any single node failure through interference alignment technique using the minimum bandwidth; i.e., the proposed codes are indeed MSR codes. Tab. I compares the proposed MDS codes with the existing alternatives in terms of stripe size, (construction) field size, and node-repair capability. Note that the $(5,3)$ Hadamard based MDS codes \cite{Had} require the stripe size $\alpha = 16$ and need to be constructed over a finite field of (at least) size 9, which are much larger than the stripe size ($\alpha = 2$) and field size ($q=4$) with the proposed MDS codes. The $(5,3)$ zigzag codes \cite{Zig}, $(6,3)$ MDS codes \cite{Interference}, and $(6,4)$ MDS codes \cite{LMDS} have stripe sizes $\alpha = 4,3,2$, respectively; yet, they can only ensure optimal repair of systematic nodes. Compared to the existing codes, the proposed MDS codes have optimal repair capability for both systematic nodes and parity nodes, and facilitate low-complexity computations and implementations for data storage, retrieval and repair. They can be attractive MDS code candidates for practical DSS.

The rest of the paper is organized as follows. In Sec . II, we describe the construction of the proposed MDS codes. Sec. III proves that the proposed MDS codes ensure the optimal exact repairs of all single node failures. Sec. IV provides examples of the proposed MDS codes. Sec. V concludes the paper.

\textit{Notations:} The following notations are used throughout the paper. Boldface letters denote vectors or matrices; $^T$ denotes transpose; 
$\text{rank}(\bm{A})$ denotes the rank of a matrix $\bm{A}$; $\bm{I}$ denotes an identity matrix with a proper size; $\bm{0}$ denotes an all-zero vector (or matrix) with a proper size; $\mathbb{F}_q$ denotes a Galois finite field of $q$ elements, and when it is clear from the context, addition, subtraction, multiplication and division are defined over Galois finite field.


\section{Code Construction}

In this section, we delineate the explicit construction of our $(5,3)$ MDS codes with a stripe size of 2; i.e., here we have $n=5$, $k=3$, and $\alpha=2$.
The proposed codes are used for DSS with $n=5$ nodes to store a data file containing $M=k\alpha=6$ symbols over a certain finite field $\mathbb{F}_q$. Each node needs to store a (coded) data segment of $\alpha =2$ symbols. Note that one symbol can be a data block of many (e.g., 4000) bits.

Denote the entire data (file) by a $6\times 1$ vector $\bm{m} := [\bm{m}_1^{T},\; \bm{m}_2^{T}, \;\bm{m}_3^{T}]^{T}$, where $\bm{m}_i$, $i=1,2,3$, are $2\times 1$ vectors. Similar to \cite{Had, LMDS}, we adopt a $n\alpha \times k\alpha$ (i.e., $10 \times 8$) generator matrix in the form:
\begin{equation} \label{P}
\bm{P} = \left[
        \begin{array}{ccc}
        \bm{I} & & \\
         & \bm{I} &\\
         & & \bm{I}\\
        \bm{I} & \bm{I} & \bm{I}\\
        \bm{A}_1^T & \bm{A}_2^T & \bm{A}_3^T\\
        \end{array}
\right]
\end{equation}
where $\bm{I}$ denotes a $2 \times 2$ identity matrix, and $\bm{A}_i$, $i=1,2,3$, are $2 \times 2$ matrices to be designed. With the generator matrix $\bm{P}$, the encoded data can be collected into a $10 \times 1$ vector $\bm{u} := [\bm{u}_1^{T},\; \bm{u}_2^{T}, \;\bm{u}_3^{T}, \; \bm{u}_4^{T}, \; \bm{u}_5^{T}]^{T}$, where $\bm{u}_i$, $i=1,\ldots,5$, are $2\times 1$ vectors. Then we have
\begin{equation} \label{U}
\begin{array}{l}
\bm{u} = \bm{P} \cdot \bm{m} \\ = \left[
        \begin{matrix}
        \bm{m}_1\\
        \bm{m}_2\\
        \bm{m}_3\\
        \bm{m}_1 + \bm{m}_2 + \bm{m}_3\\
        \bm{A}_1^T \bm{m}_1 + \bm{A}_2^T \bm{m}_2 + \bm{A}_3^T \bm{m}_3\\
        \end{matrix}
\right] = \left[
\begin{matrix}
        \bm{u}_1\\
        \bm{u}_2\\
        \bm{u}_3\\
        \bm{u}_4\\
        \bm{u}_5\\
        \end{matrix}
\right].
\end{array}
\end{equation}
The symbols in $\bm{u}_i$ are exactly the data stored in node $i=1,\ldots,5$. Clearly, data stored in the first $k=3$ nodes are in raw form; hence, these nodes are called systematic nodes. On the other hand, the rest $2$ nodes are called parity nodes as their data are linear combinations of the raw data segments.

Construction of the proposed MDS codes boils down to designing proper matrices $\bm{A}_1$, $\bm{A}_2$, and $\bm{A}_3$ such that a data collector connected to any $k=3$ nodes can retrieve the entire original data file, and any single node failure can be repaired using minimum bandwidth.

\subsection{Construction of $\bm{A}_1$, $\bm{A}_2$, and $\bm{A}_3$ over $\mathbb{F}_4$}

Denote the 2-dimensional standard basis vectors by $\bm{e}_1: =[1 \;\; 0]^T$ and $\bm{e}_2=[0 \;\; 1]^T$. Select four parameters $\lambda$, $\mu$, $\theta$ and $\eta$ over a certain finite field $\mathbb{F}_q$, and let
\begin{equation}\label{A}
\begin{array}{l}
\bm{A}_1 = \left[
    \begin{array}{c}
    \theta \bm{e}_1^T\\
    \eta \bm{e}_1^T + \bm{e}_2^T \\
    \end{array}
    \right] = \left[
    \begin{array}{cc}
    \theta & 0 \\
    \eta & 1 \\
    \end{array}
    \right], \\
\bm{A}_2 = \frac{1}{\lambda} \left[
    \begin{array}{c}
    (\theta-1) \bm{e}_1^T\\
    \eta \bm{e}_1^T + \bm{e}_2^T \\
    \end{array}
    \right] = \frac{1}{\lambda} \left[
    \begin{array}{cc}
    \theta -1 & 0 \\
    \eta & 1 \\
    \end{array}
    \right], \\
\bm{A}_3 = \frac{1}{\mu} \left[
    \begin{array}{c}
    \theta \bm{e}_1^T - \bm{e}_2^T\\
    \eta \bm{e}_1^T + \bm{e}_2^T \\
    \end{array}
    \right] = \frac{1}{\mu} \left[
    \begin{array}{cc}
    \theta & - 1 \\
    \eta & 1 \\
    \end{array}
    \right].
    \end{array}
\end{equation}
Note that there exists a relationship among $\bm{A}_1^T$, $\bm{A}_2^T$ and $\bm{A}_3^T$, which is
\begin{equation} \label{A123}
\bm{A}_1^T = \lambda \bm{A}_2^T + [\bm{e}_1 \;\; \bm{0}] = \mu \bm{A}_3^T + [\bm{e}_2 \;\; \bm{0}],
\end{equation}
where $\bm{0}$ denotes a zero vector of length $2$. Such a construction can facilitate the implementation of interference alignment to perform optimal node repair using minimum bandwidth, as will be shown in the sequel.

In order to obtain the desired MDS codes, we require that the parameters $\{\lambda, \mu, \theta, \eta\}$ satisfy the following conditions:
\begin{subequations}\label{limi}
\begin{align}
& \lambda \neq 0, 1 ; \label{limi-a}\\
& \mu \neq \lambda,0,1; \label{limi-b}\\
& \theta \neq 0, 1 ; \label{limi-c}\\
& \eta \neq 0, -1 ; \label{limi-d}\\
& \theta + \eta \neq 0 ; \label{limi-e}\\
& \theta (1- \lambda) \neq 1 ; \label{limi-f}\\
& \theta (\mu -1) \neq \eta ; \label{limi-g}\\
& \theta (\mu - \lambda) \neq \eta \lambda + \mu ; \label{limi-h}\\
& (\mu - 1) \neq \eta (1- \lambda) ; \label{limi-i}\\
& \eta +1 \neq \eta \frac{(\theta -1)(\mu - \lambda)}{\theta \lambda(\mu - 1)}. \label{limi-j}
\end{align}
\end{subequations}

\begin{table}
\centering
\caption{The product table over $\bm{F}_4$ generated by primitive polynomial $w^2 + w + 1$}
\begin{tabu} to \hsize {|X|X|X|X|X|}
\hline
 & $0$ & $1$ & $w$ & $w+1$ \\ \hline
$0$ & $0$ & $0$ & $0$ & $0$ \\ \hline
$1$ & $0$ & $1$ & $w$ & $w+1$ \\ \hline
$w$ & $0$ & $w$ & $w+1$ & $1$ \\ \hline
$w+1$ & $0$ & $w+1$ & $1$ & $w$ \\ \hline
\end{tabu}
\end{table}

Interestingly, we can show that the matrices $\bm{A}_1$, $\bm{A}_2$, and $\bm{A}_3$ in (\ref{A}) under conditions (\ref{limi-a})--(\ref{limi-j}) can be actually constructed over a finite field as small as $\mathbb{F}_4$. To see it, we use $\mathbb{F}_4$ generated by primitive polynomial $w^2 + w + 1$ for illustration. We denote elements in this $\mathbb{F}_4$ by $\{0, 1, w, w+1\}$. Note that the opposite of each element corresponds to itself; i.e., $-1$ in $\mathbb{F}_4$ equals to 1. Tab. II provides the product table for these four elements. We can establish that

\begin{theorem} \label{F4}
The proposed codes can be constructed over finite field $\mathbb{F}_4$.
\end{theorem}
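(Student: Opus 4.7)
The plan is to show that the 10 constraints (\ref{limi-a})--(\ref{limi-j}) are simultaneously satisfiable in $\mathbb{F}_4$ by exhibiting an explicit quadruple $(\lambda,\mu,\theta,\eta)\in\mathbb{F}_4^4$ and verifying each inequality. Since $\mathbb{F}_4$ has characteristic $2$, I would first record the useful simplifications $-1=1$, $w^2=w+1$, $(w+1)^2=w$, and $w(w+1)=1$ (all readable off Tab.~II).

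First I would cut down the search space using only the single-variable constraints. Conditions (\ref{limi-a}) and (\ref{limi-c}) force $\lambda,\theta\in\{w,w+1\}$; condition (\ref{limi-d}) together with $-1=1$ forces $\eta\in\{w,w+1\}$; and condition (\ref{limi-b}) then forces $\mu\in\{w,w+1\}\setminus\{\lambda\}$. Thus the entire search reduces to the eight quadruples in which $\{\lambda,\mu\}=\{w,w+1\}$ and $\theta,\eta\in\{w,w+1\}$. Among these, constraint (\ref{limi-e}) rules out $\theta=\eta$ (since then $\theta+\eta=0$), which cuts the candidates in half. So I would only need to test a handful of cases.

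Next I would fix a specific candidate, for instance $\lambda=w+1,\ \mu=w,\ \theta=w,\ \eta=w+1$, and verify conditions (\ref{limi-e})--(\ref{limi-j}) one by one using the multiplication table. Conditions (\ref{limi-e})--(\ref{limi-i}) are short calculations: each reduces to checking that a single product/sum in $\mathbb{F}_4$ avoids a specified value, and the identities $w^2=w+1$, $(w+1)^2=w$ dispose of them immediately. If for some reason this particular quadruple fails a condition, I would simply swap to another admissible one; the exhaustive list is short enough that such a backup is cheap.

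The only non-routine step I anticipate is condition (\ref{limi-j}), which contains the rational expression $\eta(\theta-1)(\mu-\lambda)/(\theta\lambda(\mu-1))$ and therefore requires inverting an element of $\mathbb{F}_4$. Here I would explicitly simplify the denominator $\theta\lambda(\mu-1)$ and numerator $(\theta-1)(\mu-\lambda)$ separately, use $\mu-1=\mu+1$, $\mu-\lambda=\mu+\lambda$, etc., and then divide by using the table (e.g.\ $1/w=w+1$). Once this value is computed and shown to differ from $\eta+1$, the theorem follows, and I would conclude by stating that the displayed quadruple realizes $\bm{A}_1,\bm{A}_2,\bm{A}_3$ entirely within $\mathbb{F}_4$, so the whole generator matrix $\bm{P}$ in (\ref{P}) has entries in $\mathbb{F}_4$.
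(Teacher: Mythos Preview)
Your proposal is correct and follows essentially the same strategy as the paper: exhibit one explicit quadruple $(\lambda,\mu,\theta,\eta)$ in $\mathbb{F}_4$ and verify (\ref{limi-a})--(\ref{limi-j}) directly. The paper picks $\eta=\lambda=w$, $\theta=\mu=w+1$; your candidate $\lambda=w+1,\ \mu=w,\ \theta=w,\ \eta=w+1$ is its image under the Frobenius automorphism $w\leftrightarrow w+1$, so the verifications are isomorphic (and indeed all ten conditions check out for your choice). Your preliminary pruning of the search space via (\ref{limi-a})--(\ref{limi-e}) is a nice addition but not needed once a working quadruple is in hand.
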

\begin{proof}
In $\mathbb{F}_4$ generated by primitive polynomial $w^2 + w + 1$, let $\eta = \lambda  = w$ and $\theta = \mu = w + 1$. Then the conditions (\ref{limi-a})--(\ref{limi-d}) clearly hold. Furthermore, we have
\[
\begin{array}{l}
\theta + \eta = 1 \neq 0 , \\
\theta (1- \lambda) = (w + 1)(1 + w) = w \neq 1 , \\
\theta (\mu -1) = (w + 1) w = 1 \neq \eta , \\
\theta (\mu - \lambda) = (w + 1) \neq \eta \lambda + \mu = w \cdot w + w + 1 = 0 , \\
(\mu - 1) = w \neq \eta (1- \lambda) = w (1 + w) = 1 , \\
\eta +1 = w + 1 \neq \eta \displaystyle \frac{(\theta -1)(\mu - \lambda)}{\theta \lambda(\mu - 1)} = w . \\
\end{array}
\]
In other word, conditions (\ref{limi-e})--(\ref{limi-j}) are satisfied. It is also easy to see that with $\eta = \lambda  = w$ and $\theta = \mu = w + 1$, all entries of $\bm{A}_1$, $\bm{A}_2$, and $\bm{A}_3$ in (\ref{A}) are either 0, 1, $w$, or $w+1$.
\end{proof}

\subsection{MDS Property}

We next show that the proposed codes are in fact MDS codes. To this end, we first establish the following two lemmas for the proposed matrices $\bm{A}_1$, $\bm{A}_2$, and $\bm{A}_3$.
\begin{lemma}\label{invert}
$\bm{A}_1$, $\bm{A}_2$, and $\bm{A}_3$ have full rank.
\end{lemma}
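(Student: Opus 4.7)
The plan is to prove the lemma by direct computation of the three $2 \times 2$ determinants and then invoke the listed conditions (\ref{limi-a})--(\ref{limi-j}) to verify that each is nonzero over $\mathbb{F}_q$. Since each $\bm{A}_i$ is $2 \times 2$, full rank is equivalent to nonvanishing determinant, so nothing fancier than the $ad-bc$ formula is needed.

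Concretely, I would first read off from (\ref{A}) that $\det(\bm{A}_1) = \theta$, $\det(\bm{A}_2) = (\theta - 1)/\lambda^2$, and $\det(\bm{A}_3) = (\theta + \eta)/\mu^2$. Then I would match each expression against the hypotheses: $\det(\bm{A}_1) \neq 0$ follows from (\ref{limi-c}); $\det(\bm{A}_2) \neq 0$ follows from (\ref{limi-c}) (giving $\theta \neq 1$) together with (\ref{limi-a}) (giving $\lambda \neq 0$); and $\det(\bm{A}_3) \neq 0$ follows from (\ref{limi-e}) together with (\ref{limi-b}).

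There is no real obstacle here: the lemma is a bookkeeping step that confirms the parameter conditions were chosen strong enough to rule out degenerate $\bm{A}_i$. The only mild thing to be careful about is that all arithmetic (including the reciprocals $1/\lambda$, $1/\mu$ and the additive inverse $-1$) is performed in $\mathbb{F}_q$; in particular, conditions (\ref{limi-a}) and (\ref{limi-b}) guarantee that $1/\lambda$ and $1/\mu$ are well defined, so the matrices $\bm{A}_2$ and $\bm{A}_3$ themselves are legitimate. Once these three determinant checks are written out, the lemma is complete.
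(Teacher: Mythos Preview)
Your proposal is correct and essentially matches the paper's own proof: the paper simply asserts that conditions (\ref{limi-c})--(\ref{limi-e}) together with $\lambda,\mu \neq 0$ immediately give $\mathrm{rank}(\bm{A}_1)=\mathrm{rank}(\bm{A}_2)=\mathrm{rank}(\bm{A}_3)=2$, which is exactly your determinant computation spelled out in slightly more detail.
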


\begin{proof}
By conditions (\ref{limi-c})--(\ref{limi-e}) and $\lambda , \mu \neq 0$ in (\ref{A}), we can readily derive
$\text{rank}(\bm{A}_1) = \text{rank}(\bm{A}_2) = \text{rank}(\bm{A}_3) = 2$.
\end{proof}

\begin{lemma}\label{MDS}
$\bm{A}_1 - \bm{A}_2$, $\bm{A}_1 - \bm{A}_3$ and $\bm{A}_2 - \bm{A}_3$ have full rank.
\end{lemma}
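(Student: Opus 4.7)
My plan is to prove each of the three rank assertions by direct computation of a $2\times 2$ determinant, and then to certify that each determinant is nonzero by invoking exactly the side conditions in (\ref{limi}) that it needs. Since each $\bm{A}_i$ is $2\times 2$, full rank is equivalent to nonzero determinant, so the whole lemma reduces to showing three scalar quantities do not vanish.

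First I would subtract the explicit forms in (\ref{A}) entry by entry. For $\bm{A}_1-\bm{A}_2$, after clearing the common denominator $\lambda$, the difference has a zero in the $(1,2)$ position, so the determinant is just the product of the two diagonal entries. A short calculation gives
\[
\det(\bm{A}_1-\bm{A}_2)=\frac{(\theta\lambda-\theta+1)(\lambda-1)}{\lambda^{2}},
\]
which is nonzero by (\ref{limi-a}) (so $\lambda\neq 0,1$) together with (\ref{limi-f}) (equivalent to $\theta\lambda-\theta+1\neq 0$). For $\bm{A}_1-\bm{A}_3$ the $(1,2)$ entry is no longer zero; expanding yields, after factoring out $(\mu-1)/\mu^{2}$,
\[
\det(\bm{A}_1-\bm{A}_3)=\frac{(\mu-1)\bigl[\theta(\mu-1)-\eta\bigr]}{\mu^{2}},
\]
nonzero by (\ref{limi-b}) and (\ref{limi-g}). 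For $\bm{A}_2-\bm{A}_3$ I would put everything over $\lambda\mu$, so that a factor $(\mu-\lambda)/(\lambda^{2}\mu^{2})$ pulls out of the determinant and leaves
\[
\det(\bm{A}_2-\bm{A}_3)=\frac{(\mu-\lambda)\bigl[\theta(\mu-\lambda)-\eta\lambda-\mu\bigr]}{\lambda^{2}\mu^{2}},
\]
nonzero by (\ref{limi-b}) (so $\mu\neq\lambda$ and $\mu,\lambda\neq 0$) and (\ref{limi-h}).

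There is no real obstacle here; the entire lemma is three scalar determinant calculations, and the design of the list (\ref{limi-a})--(\ref{limi-j}) was evidently done precisely so that conditions (\ref{limi-a}), (\ref{limi-b}), (\ref{limi-f}), (\ref{limi-g}), (\ref{limi-h}) line up one-for-one with the nonvanishing requirements on these three determinants. The only thing to watch is algebraic bookkeeping with the $1/\lambda$ and $1/\mu$ prefactors: pulling them out cleanly before taking determinants, and checking that in characteristic $2$ (the relevant case for the $\mathbb{F}_4$ construction of Theorem~\ref{F4}) the signs $-1$ I am writing are harmless. Once these three determinants are displayed, the conclusion is immediate.
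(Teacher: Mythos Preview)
Your proposal is correct and essentially the same approach as the paper's: both arguments verify full rank of each $2\times2$ difference by direct computation, invoking exactly the same conditions (\ref{limi-a}), (\ref{limi-b}), (\ref{limi-f}), (\ref{limi-g}), (\ref{limi-h}) at the corresponding steps. The only cosmetic difference is that the paper phrases each check as a rank computation via row reduction, whereas you compute the $2\times2$ determinants explicitly; the underlying algebra and the conditions used are identical.
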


\begin{proof}
Since $\lambda , \mu \neq 0$, we have
\begin{equation}\nonumber
\begin{array}{l}
     \text{rank} ( \bm{A}_1 - \bm{A}_2 ) = \text{rank}
    (
    \frac{1}{\lambda}
    \left[
    \begin{array}{c}
    (\theta \lambda - \theta + 1) \bm{e}_1^T \\
    (\lambda - 1) (\eta \bm{e}_1^T + \bm{e}_2^T) \\
    \end{array}
    \right]
    ) \\
    ~~~~ = \text{rank}
    (
    \left[
    \begin{array}{cc}
    \theta \lambda - \theta + 1 & 0 \\
    \eta(\lambda - 1) & \lambda - 1 \\
    \end{array}
    \right]
    )=2
\end{array}
\end{equation}
due to $\lambda \neq 1 $ and $1 - \theta (1- \lambda) = \theta \lambda - \theta + 1 \neq 0$ by (\ref{limi-f});
\begin{equation}\nonumber
\begin{array}{l}
    \text{rank} ( \bm{A}_1 - \bm{A}_3 )  = \text{rank}
    (
    \frac{1}{\mu}
    \left[
    \begin{array}{c}
    \theta (\mu - 1) \bm{e}_1^T + \bm{e}_2^T \\
    (\mu - 1) (\eta \bm{e}_1^T + \bm{e}_2^T) \\
    \end{array}
    \right]
    ) \\
    ~~~~ =  \text{rank}
    (
    \left[
    \begin{array}{cc}
    \theta (\mu - 1) & 1 \\
    \eta (\mu - 1) & \mu - 1 \\
    \end{array}
    \right] ) \\
     ~~~~ =  \text{rank}
    (
    \left[
    \begin{array}{cc}
    \theta (\mu - 1) - \eta & 1 \\
    0 & \mu - 1 \\
    \end{array}
    \right]
    ) = 2
    \end{array}
\end{equation}
due to $ \mu \neq 1$ and $\theta (\mu - 1) - \eta \neq 0$ by (\ref{limi-g}); and
\begin{equation}
\begin{array}{l}
    \text{rank} ( \bm{A}_2 - \bm{A}_3 ) = \text{rank}
    (
    \left[
    \begin{array}{c}
    (\frac{\theta - 1}{\lambda} - \frac{\theta}{\mu}) \bm{e}_1^T + \frac{1}{\mu} \bm{e}_2^T \\
    (\frac{1}{\lambda} - \frac{1}{\mu}) (\eta \bm{e}_1^T + \bm{e}_2^T) \\
    \end{array}
    \right]
    ) \\
    ~~~~ = \text{rank}
    (
    \left[
    \begin{array}{cc}
    \frac{\theta - 1}{\lambda} - \frac{\theta}{\mu} & \frac{1}{\mu}  \\
    \eta (\frac{1}{\lambda} - \frac{1}{\mu}) & \frac{1}{\lambda} - \frac{1}{\mu} \\
    \end{array}
    \right]
    ) \\
    ~~~~ = \text{rank}
    (
    \left[
    \begin{array}{cc}
    \frac{\theta - 1}{\lambda} - \frac{\theta + \eta}{\mu} & \frac{1}{\mu}  \\
    0 & \frac{1}{\lambda} - \frac{1}{\mu} \\
    \end{array}
    \right]
    ) = 2
    \end{array}
\end{equation}
due to $\lambda \neq \mu$ and $\theta (\mu - \lambda) - \eta \lambda - \mu = \lambda \mu (\frac{\theta - 1}{\lambda} - \frac{\theta + \eta}{\mu}) \neq 0$ by (\ref{limi-h}).
\end{proof}

Based on Lemmas II.2 and II.3, we can then show
\begin{theorem}
The proposed codes have MDS property.
\end{theorem}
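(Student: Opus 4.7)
The plan is to show that any $k=3$ of the $n=5$ nodes suffice to reconstruct $\bm m$, which is equivalent to showing that every $6\times 6$ submatrix of $\bm P$ formed by choosing three of the five row-blocks (each of height $\alpha=2$) is invertible over $\mathbb F_4$. There are $\binom{5}{3}=10$ such choices, so I would proceed by cases grouped by the number of parity blocks (block 4 or 5) included.

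First, the case with zero parity blocks is just nodes $\{1,2,3\}$: the corresponding submatrix is the $6\times 6$ identity, so it is trivially invertible. Next, the six cases with exactly one parity block split into two subcases. If the parity block is node 4 (the plain sum $\bm m_1+\bm m_2+\bm m_3$), then together with any two systematic nodes the missing systematic segment is recovered by a single subtraction; formally the submatrix is block lower-triangular with identity blocks on the diagonal. If the parity block is node 5, then with two systematic nodes $i,j$ we know $\bm m_i,\bm m_j$ directly, and the third segment $\bm m_\ell$ is determined from $\bm A_\ell^{T}\bm m_\ell=\bm u_5-\bm A_i^{T}\bm m_i-\bm A_j^{T}\bm m_j$; invertibility follows because $\bm A_\ell$ has full rank by Lemma \ref{invert}.

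The remaining three cases are the interesting ones: both parity nodes are included together with exactly one systematic node $\ell\in\{1,2,3\}$. Writing $\{i,j\}=\{1,2,3\}\setminus\{\ell\}$, after substituting the known $\bm m_\ell$ the system for $(\bm m_i,\bm m_j)$ takes the block form
\begin{equation}\nonumber
\begin{bmatrix} \bm I & \bm I \\ \bm A_i^{T} & \bm A_j^{T}\end{bmatrix}\begin{bmatrix}\bm m_i\\\bm m_j\end{bmatrix}=\begin{bmatrix}\bm u_4-\bm m_\ell\\\bm u_5-\bm A_\ell^{T}\bm m_\ell\end{bmatrix}.
\end{equation}
A Schur-complement computation shows the coefficient matrix is invertible iff $\bm A_j^{T}-\bm A_i^{T}$ is invertible, i.e.\ iff $\bm A_i-\bm A_j$ has full rank. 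Lemma \ref{MDS} supplies exactly this for all three pairs $(i,j)\in\{(1,2),(1,3),(2,3)\}$, closing these three cases.

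I do not expect a true obstacle here: the combinatorics is small, and the two preceding lemmas were set up precisely to cover the non-trivial subcases. The only step that requires a little care is writing out the Schur-complement reduction correctly to make sure invertibility of the $4\times 4$ block matrix is reduced to invertibility of a single $\bm A_i-\bm A_j$, so that Lemma \ref{MDS} can be invoked verbatim. Once that reduction is in place, the theorem follows by enumerating the ten cases above.
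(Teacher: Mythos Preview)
Your proposal is correct and follows essentially the same approach as the paper: case-split on how many parity blocks are selected, handle the trivial cases directly, invoke Lemma~\ref{invert} for the ``two systematic + node 5'' cases, and invoke Lemma~\ref{MDS} (via a Schur-complement/block-elimination argument) for the ``one systematic + both parities'' cases. The paper's proof is terser in the last step but relies on exactly the same reduction to the invertibility of $\bm A_i-\bm A_j$.
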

\begin{proof}
We prove the lemma by checking whether the proposed codes can ensure that the entire data can be retrieved by using data segments at arbitrary three nodes. With the proposed codes, if a data collector connects to the three systematic nodes or any two systematic nodes and the first parity node, then clearly the entire data can be readily retrieved. Suppose that the data collector connects to the first and the second systematic nodes, as well as the second parity node. Then it can obtain
\begin{equation}\label{eq.Am}
\left[
\begin{matrix}
        \bm{u}_1\\
        \bm{u}_2\\
        \bm{u}_5\\
        \end{matrix}
\right] = \left[
        \begin{matrix}
        \bm{I} & & \\
         & \bm{I} &\\
        \bm{A}_1^T & \bm{A}_2^T & \bm{A}_3^T\\
        \end{matrix}
\right] \bm{m}.
\end{equation}
It follows from Lemmas II.2 that $\bm{A}_3$ has full rank; hence, the big matrix before $\bm{m}$ in (\ref{eq.Am}) is invertible, and the original data $\bm{m}$ can be retrieved. Similarly, as $\bm{A}_2$ (or $\bm{A}_1$) has full rank by Lemmas II.2, the original data $\bm{m}$ can be retrieved when the data collector connects to the first (or second) and the third systematic nodes, as well as the second parity node.
Furthermore, as $\bm{A}_1 - \bm{A}_2$, $\bm{A}_1 - \bm{A}_3$ and $\bm{A}_2 - \bm{A}_3$ have full rank by Lemmas II.3, we can show that the original data $\bm{m}$ can be retrieved when the data collector connects to one of the systematic nodes and two parity nodes. As the entire data can be retrieved by the data collector connected to arbitrary three nodes, the proposed codes are MDS codes.
\end{proof}

Theorems II.1 and II.4 establish that the proposed codes are MDS codes that can be constructed over a small finite field $\mathbb{F}_4$. With such a construction, the computational complexity of retrieving the original data file is low. It is clear that data retrieval with the three systematic nodes consumes no computation. For retrieval with two systematic nodes and one parity node, take (\ref{eq.Am}) as an example. In this case, we have
\begin{align}
\bm{m} & = \left[
        \begin{matrix}
        \bm{I} & & \\
         & \bm{I} &\\
        \bm{A}_1^T & \bm{A}_2^T & \bm{A}_3^T\\
        \end{matrix}
\right]^{-1} \left[
\begin{matrix}
        \bm{u}_1\\
        \bm{u}_2\\
        \bm{u}_5\\
        \end{matrix}
\right] \nonumber \\
& =  \left[
        \begin{matrix}
        \bm{I} & & \\
         & \bm{I} &\\
        -\bm{A}_3^{-T}\bm{A}_1^T & -\bm{A}_3^{-T}\bm{A}_2^T & \bm{A}_3^{-T}\\
        \end{matrix}
\right] \left[
\begin{matrix}
        \bm{u}_1\\
        \bm{u}_2\\
        \bm{u}_5\\
        \end{matrix}
\right] \nonumber \\
& = \left[
\begin{matrix}
        \bm{u}_1\\
        \bm{u}_2\\
        -\bm{A}_3^{-T}\bm{A}_1^T\bm{u}_1 - \bm{A}_3^{-T}\bm{A}_2^T \bm{u}_2 + \bm{A}_3^{-T}\bm{u}_5\\
        \end{matrix}
\right].\nonumber
\end{align}
Recovering $\bm{m}$ only requires calculating $-\bm{A}_3^{-T}\bm{A}_1^T\bm{u}_1 - \bm{A}_3^{-T}\bm{A}_2^T \bm{u}_2 + \bm{A}_3^{-T}\bm{u}_5$. Note that the coefficient matrices $-\bm{A}_3^{-T}\bm{A}_1^T$, $\bm{A}_3^{-T}\bm{A}_2^T$, and $\bm{A}_3^{-T}$ can be pre-computed offline.\footnote{Even if they need to be computed online, the computational complexity is low as $\bm{A}_1$, $\bm{A}_2$, and $\bm{A}_3$ are all $2 \times 2$ matrices.} Hence, the computation for online retrieval consists of $k\alpha^2$ multiplications and $(2k-1)\alpha$ additions. With $M=k\alpha$, the overall complexity is ${\cal O}(M \alpha)$. For the worst case where two parity nodes are involved to reconstruct the original file, it was shown that the data retrieval can be still performed with a computational complexity of ${\cal O}(M \alpha)$. With $\alpha=2$ in the proposed codes, such a complexity is clearly low.
In addition, the proposed codes are constructed over a small finite field $\mathbb{F}_4$; hence, the computational complexity is indeed very low.

\section{Node Repair through Interference Alignment}

\begin{figure}
\centering
\includegraphics[width=2.5in]{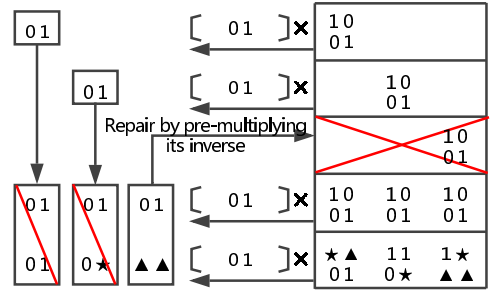}
\caption{Repair of the third systematic node}
\end{figure}

In this paper, we mainly concentrate on the repair of a single node failure since such a case occurs at most of time in practice.
In this section, we rigorously establish that the proposed MDS codes have optimal repair capability; i.e., they are MSR codes that can ensure exact repair of every single node failure using minimum bandwidth. From  (\ref{lower}), we can derive the minimum repair bandwidth in our scenario as
\begin{equation} \label{lower1}
\gamma_{min} = \frac{M}{k} \cdot \frac{n - 1}{n - k} = \alpha \cdot \frac{k + 1}{2}= 4~(\text{symbols}).
\end{equation}
To achieve this optimal repair bandwidth, a key technique that we employ for node repair is interference alignment.

\subsection{Repair of Systematic Nodes}

The interference alignment technique was first proposed to approach the capacity of interference channels in the context of wireless communications \cite{Int_1, Int_2}, and it was later introduced into MDS code design for DSS by Y. Wu et al. \cite{IA}. We first show that optimal repair of systematic nodes can be achieved with the proposed MDS codes through interference alignment.

Assume that the $i$th ($i=1,2,3$) systematic node fails. By (\ref{lower1}), repair of a node failure is allowed to download only one symbol per node from the four surviving nodes. Let $\bm{\varphi}_{i1}^T$ and $\bm{\varphi}_{i2}^T$ denote two $1 \times 2$ vectors. From the first and second parity nodes, the newcomer can download one symbol $d_{i1}$ and one symbol $d_{i2}$, respectively, as follows.
\begin{align}
\left[
\begin{array}{c}
d_{i1} \\
d_{i2} \\
\end{array}
\right]&  =
\left[
\begin{array}{c}
\bm{\varphi}_{i1}^T \bm{u_{k+1}} \\
\bm{\varphi}_{i2}^T \bm{u_{k+2}} \\
\end{array}
\right] \nonumber \\
& =
\left[
\begin{array}{c}
\bm{\varphi}_{i1}^T \\
\bm{\varphi}_{i2}^T \bm{A}_i^T \\
\end{array}
\right] \bm{m}_i + \sum_{j \neq i}
\left[
\begin{array}{c}
\bm{\varphi}_{i1}^T \\
\bm{\varphi}_{i2}^T \bm{A}_j^T \\
\end{array}
\right] \bm{m}_j. \label{Sysrep}
\end{align}
Note that symbols $d_{i1}$ and $d_{i2}$ are linear combinations of symbols stored in the first and the second parity nodes. Clearly, the first term containing $\bm{m}_i$ in (\ref{Sysrep}) is the desired ``signal'' and the other term is ``interference'' which should be aligned. As the surviving two systematic nodes $j$ contain $\bm{m}_j$, $j\neq i$, the newcomer can further download data from these two nodes to cancel out the ``interference'' in (\ref{Sysrep}). However, to achieve minimum repair bandwidth, the newcomer can only download one symbol per node. To perform optimal repair through interference alignment, the following conditions must hold:
\begin{equation} \label{IAsys}
\begin{array}{l}
\text{rank} ( \left[
\begin{array}{c}
\bm{\varphi}_{i1}^T \\
\bm{\varphi}_{i2}^T \bm{A}_i^T \\
\end{array}
\right] ) = 2 \\
\text{rank} ( \left[
\begin{array}{c}
\bm{\varphi}_{i1}^T \\
\bm{\varphi}_{i2}^T \bm{A}_j^T \\
\end{array}
\right] ) = 1, ~~ j \neq i
\end{array}
\end{equation}
As the newcomer wants to recover $\bm{m_i}$ stored in failed nodes, the matrix before $\bm{m_i}$ in (\ref{Sysrep}) should have full rank so that $\bm{m_i}$ might be repaired by pre-multiplying the inverse of this matrix after interference cancellation. This also implies that $\bm{\varphi}_{i1}^T$ should have full row rank. The other conditions then actually require that the row spaces of $\bm{\varphi}_{i1}^T$ and $\bm{\varphi}_{i2}^T \bm{A}_j^T$ to fully overlap. Under these conditions, the newcomer can download only one linear combination of symbols at each systematic node, which can be either $\bm{\varphi}_{i1}^T \bm{m}_j$ or $\bm{\varphi}_{i2}^T \bm{A}_j^T \bm{m}_j$, to (align with and) cancel the interference.

Now in order to show that the systematic nodes can be optimally repaired, we only need to prove that there exist $\bm{\varphi}_{i1}^T$ and $\bm{\varphi}_{i2}^T$, $\forall i=1,2,3$, such that the conditions in (\ref{IAsys}) hold. This is established in the following three lemmas.

\begin{figure}
\centering
\includegraphics[width=2.5in]{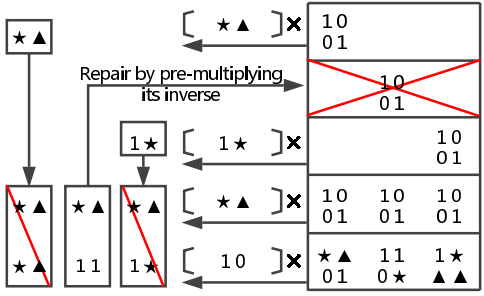}
\caption{Repair of the second systematic node}
\end{figure}

\begin{lemma} \label{3}
For $i = 3$, the conditions in (\ref{IAsys}) hold with $\bm{\varphi}_{32}^T =\bm{e}_2^T $ and $\bm{\varphi}_{31}^T = \bm{\varphi}_{32}^T \bm{A}_1^T$.
\end{lemma}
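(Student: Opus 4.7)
The plan is to verify the three rank conditions in (\ref{IAsys}) for $i = 3$ by direct substitution, relying on the structural identity (\ref{A123}) so that almost all of the work collapses to the single observation that $\bm{e}_2^T \bm{e}_1 = 0$.

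First I would pre-multiply (\ref{A123}) by $\bm{\varphi}_{32}^T = \bm{e}_2^T$. Since $\bm{e}_2^T [\bm{e}_1 \;\; \bm{0}] = [0,0]$ while $\bm{e}_2^T [\bm{e}_2 \;\; \bm{0}] = [1,0]$, this immediately produces the chain of identities $\bm{\varphi}_{31}^T = \bm{\varphi}_{32}^T \bm{A}_1^T = \lambda\, \bm{\varphi}_{32}^T \bm{A}_2^T = \mu\, \bm{\varphi}_{32}^T \bm{A}_3^T + [1,0]$. From this chain the two alignment conditions drop out at once. For $j = 1$ the two rows of the stacked matrix coincide by the very definition of $\bm{\varphi}_{31}^T$. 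For $j = 2$ the middle equality gives $\bm{\varphi}_{32}^T \bm{A}_2^T = \lambda^{-1} \bm{\varphi}_{31}^T$ (using $\lambda \neq 0$ from (\ref{limi-a})), so the two rows are again scalar multiples and the rank collapses to $1$.

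Next, for the rank-$2$ condition involving $\bm{A}_3$, I would compute $\bm{\varphi}_{31}^T = \bm{e}_2^T \bm{A}_1^T$ explicitly; reading off the second row of $\bm{A}_1^T$ yields $\bm{\varphi}_{31}^T = [0,1]$. The last equality of the chain then gives $\bm{\varphi}_{32}^T \bm{A}_3^T = \mu^{-1}\bigl(\bm{\varphi}_{31}^T - [1,0]\bigr) = \mu^{-1}[-1,1]$. Stacking $[0,1]$ on top of $\mu^{-1}[-1,1]$ produces a $2\times 2$ matrix of determinant $\mu^{-1} \neq 0$ by (\ref{limi-b}), so its rank is $2$ as required.

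There is no serious obstacle: the design of $\bm{A}_1, \bm{A}_2, \bm{A}_3$ together with the choice $\bm{\varphi}_{32}^T = \bm{e}_2^T$ was engineered precisely so that projecting (\ref{A123}) onto the $\bm{e}_2$-direction annihilates the first correction block $[\bm{e}_1 \;\; \bm{0}]$ while leaving a nonzero residue from the second. The only constraints actually used are $\lambda, \mu \neq 0$ from (\ref{limi-a})--(\ref{limi-b}); the deeper conditions (\ref{limi-c})--(\ref{limi-j}) must be reserved for the companion lemmas covering $i = 1, 2$ and for the parity-node repair analysis.
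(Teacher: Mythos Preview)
Your proof is correct and follows essentially the same path as the paper: both arguments reduce to computing $\bm{\varphi}_{32}^T\bm{A}_j^T$ for $j=1,2,3$ and reading off the ranks, arriving at the identical vectors $[0,1]$, $\lambda^{-1}[0,1]$, and $\mu^{-1}[-1,1]$. The only cosmetic difference is that the paper obtains these three vectors by plugging $\bm{\varphi}_{32}=\bm{e}_2$ into the explicit matrix formulas~(\ref{A}), whereas you derive the latter two from the first via the structural identity~(\ref{A123}); this is a slightly more conceptual packaging of the same arithmetic and uses exactly the same hypotheses ($\lambda,\mu\neq 0$).
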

\begin{proof}
Clearly $\bm{e}_1^T  \bm{\varphi}_{32} = 0$ and $\bm{e}_2^T \bm{\varphi}_{32} = 1$. Then we have
\begin{equation}
\begin{array}{l}
\bm{A}_1 \bm{\varphi}_{32} = \left[
    \begin{array}{c}
    \theta \bm{e}_1^T \\
    \eta \bm{e}_1^T + \bm{e}_2^T \\
    \end{array}
    \right] \bm{\varphi}_{32} = \left[
    \begin{array}{c}
    0 \\
    1 \\
    \end{array}
    \right]
\\
\bm{A}_2 \bm{\varphi}_{32}
  = \frac{1}{\lambda} \left[
    \begin{array}{c}
    (\theta -1) \bm{e}_1^T \\
    \eta \bm{e}_1^T + \bm{e}_2^T \\
    \end{array}
    \right]  \bm{\varphi}_{32}
    = \frac{1}{\lambda} \left[
    \begin{array}{c}
    0 \\
    1 \\
    \end{array}
    \right]
\\
\bm{A}_3 \bm{\varphi}_{32}
  = \frac{1}{\mu} \left[
    \begin{array}{c}
    \theta \bm{e}_1^T - \bm{e}_2^T \\
    \eta \bm{e}_1^T + \bm{e}_2^T \\
    \end{array}
    \right] \bm{\varphi}_{32} =
    \frac{1}{\mu} \left[
    \begin{array}{c}
    - 1 \\
    1 \\
    \end{array}
    \right]
    \end{array}
\end{equation}
Since $\bm{\varphi}_{31}^T = \bm{\varphi}_{32}^T \bm{A}_1^T = \bm{e}_2^T \bm{A}_1^T$, and $\lambda, \mu \neq 1$ in (\ref{A}), we further have
\begin{equation}
    \begin{array}{l}
    \text{rank} (
    \left[
    \begin{array}{c}
    \bm{\varphi}_{31}^T \\
    \bm{\varphi}_{32}^T \bm{A}_1^T \\
    \end{array}
    \right] ) = \text{rank} (
    \left[
    \begin{array}{cc}
    0 & 1 \\
    0 & 1 \\
    \end{array}
    \right] ) = 1
    \\
    \text{rank} (
    \left[
    \begin{array}{c}
    \bm{\varphi}_{31}^T \\
    \bm{\varphi}_{32}^T \bm{A}_2^T \\
    \end{array}
    \right] ) = \text{rank}
    (
    \left[
    \begin{array}{cc}
    0 & 1 \\
    0 & 1/\lambda \\
    \end{array}
    \right] ) = 1 \\
    \text{rank}
    (
    \left[
    \begin{array}{c}
    \bm{\varphi}_{31}^T \\
    \bm{\varphi}_{32}^T \bm{A}_3^T \\
    \end{array}
    \right] ) = \text{rank}
     (
    \left[
    \begin{array}{cc}
    0 & 1 \\
    -1 / \mu & 1 / \mu \\
    \end{array}
    \right] ) = 2
\end{array}
\end{equation}
The proof completes.
\end{proof}

\begin{lemma} \label{2}
For $i = 2$, the conditions in (\ref{IAsys}) hold with $\bm{\varphi}_{22}^T =\bm{e}_1^T $ and $\bm{\varphi}_{21}^T = \bm{\varphi}_{22}^T \bm{A}_1^T$.
\end{lemma}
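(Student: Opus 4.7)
The plan is to mirror the template used in the proof of Lemma~\ref{3}, but with the probe vectors chosen in the complementary coordinate direction so as to isolate the second systematic block. Starting from $\bm{\varphi}_{22}^T = \bm{e}_1^T$, I would substitute directly into the explicit forms of $\bm{A}_1$, $\bm{A}_2$, and $\bm{A}_3$ in~(\ref{A}) and compute $\bm{A}_j \bm{\varphi}_{22}$ for $j=1,2,3$. The three resulting column vectors are $\bm{A}_1 \bm{e}_1 = [\theta,\;\eta]^T$, $\bm{A}_2 \bm{e}_1 = \frac{1}{\lambda}[\theta-1,\;\eta]^T$, and $\bm{A}_3 \bm{e}_1 = \frac{1}{\mu}[\theta,\;\eta]^T$. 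Transposing these gives the three row vectors $\bm{\varphi}_{22}^T \bm{A}_j^T$, and by the stated choice $\bm{\varphi}_{21}^T = \bm{\varphi}_{22}^T \bm{A}_1^T = [\theta,\;\eta]$.

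Next I would verify the three rank conditions in~(\ref{IAsys}) for $i=2$. For the two interference blocks, $[\bm{\varphi}_{21}^T;\,\bm{\varphi}_{22}^T \bm{A}_1^T]$ consists of two identical rows and therefore has rank $1$ automatically, while $[\bm{\varphi}_{21}^T;\,\bm{\varphi}_{22}^T \bm{A}_3^T]$ consists of $[\theta,\;\eta]$ and $\frac{1}{\mu}[\theta,\;\eta]$, which are scalar multiples (with $\mu \neq 0$) and again span a one-dimensional row space. For the signal block, the $2\times 2$ matrix $[\bm{\varphi}_{21}^T;\,\bm{\varphi}_{22}^T \bm{A}_2^T]$ has determinant $\theta \cdot \frac{\eta}{\lambda} - \eta \cdot \frac{\theta-1}{\lambda} = \frac{\eta}{\lambda}$, which is nonzero in view of conditions~(\ref{limi-a}) and~(\ref{limi-d}); thus the block has rank $2$ as required.

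There is not really a substantive obstacle here: the interference-alignment conditions collapse, by the algebraic shape of $\bm{A}_1$ and $\bm{A}_3$ when probed by $\bm{e}_1$, to the observation that both $\bm{A}_1 \bm{e}_1$ and $\bm{A}_3 \bm{e}_1$ point in the same direction $[\theta,\;\eta]^T$ (up to the scalar $1/\mu$), while $\bm{A}_2 \bm{e}_1$ deviates from that direction precisely because of the $\theta - 1$ entry. The only quantitative check is that this deviation is nonzero, which reduces to $\eta/\lambda \neq 0$ and requires no condition beyond~(\ref{limi-a}) and~(\ref{limi-d}). The proof should therefore be presented in essentially the same three-block layout as Lemma~\ref{3}, with the roles of $\bm{e}_1$ and $\bm{e}_2$ exchanged.
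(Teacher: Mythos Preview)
Your proposal is correct and follows essentially the same approach as the paper's own proof: both compute $\bm{A}_j\bm{e}_1$ for $j=1,2,3$ directly from~(\ref{A}), observe that $\bm{A}_1\bm{e}_1$ and $\bm{A}_3\bm{e}_1$ are parallel while $\bm{A}_2\bm{e}_1$ is not, and verify the three rank conditions in~(\ref{IAsys}). The only cosmetic difference is that you make the rank-$2$ step explicit via the determinant $\eta/\lambda$, whereas the paper simply asserts the rank after citing $\lambda,\mu\neq 0$ and $\theta,\eta\neq 0$; your justification via~(\ref{limi-a}) and~(\ref{limi-d}) is equivalent (and note that $\eta\neq 0$ also guarantees the interference rows are nonzero, so the rank-$1$ claims are indeed exact).
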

\begin{proof}
As $\bm{e}_1^T  \bm{\varphi}_{22} = 1$ and $\bm{e}_2^T \bm{\varphi}_{22} = 0$, we have
\begin{equation}
\begin{array}{l}
\bm{A}_1 \bm{\varphi}_{22} = \left[
    \begin{array}{c}
    \theta \bm{e}_1^T  \\
    \eta \bm{e}_1^T + \bm{e}_2^T \\
    \end{array}
    \right] \bm{\varphi}_{22} = \left[
    \begin{array}{c}
    \theta \\
    \eta \\
    \end{array}
    \right]
\\
\bm{A}_2 \bm{\varphi}_{22}
   = \frac{1}{\lambda} \left[
    \begin{array}{c}
    (\theta -1) \bm{e}_1^T \\
    \eta \bm{e}_1^T + \bm{e}_2^T \\
    \end{array}
    \right] \bm{\varphi}_{22}
    = \frac{1}{\lambda} \left[
    \begin{array}{c}
    \theta -1 \\
    \eta \\
    \end{array}
    \right]
\\
\bm{A}_3 \bm{\varphi}_{22}
   = \frac{1}{\mu} \left[
    \begin{array}{c}
    \theta \bm{e}_1^T - \bm{e}_2^T \\
    \eta \bm{e}_1^T + \bm{e}_2^T \\
    \end{array}
    \right] \bm{\varphi}_{22} =
    \frac{1}{\mu} \left[
    \begin{array}{c}
    \theta \\
    \eta \\
    \end{array}
    \right]
\end{array}
\end{equation}
Since $\bm{\varphi}_{21}^T = \bm{\varphi}_{22}^T \bm{A}_1^T = \bm{e}_1^T \bm{A}_1^T$, $\lambda, \mu \neq 0$ in (\ref{A}), and $\theta,\eta \neq 0$ in (\ref{limi-c})--(\ref{limi-d}), we can derive
\begin{equation}
    \begin{array}{l}
    \text{rank}
    (
    \left[
    \begin{array}{c}
    \bm{\varphi}_{21}^T \\
    \bm{\varphi}_{22}^T \bm{A}_1^T \\
    \end{array}
    \right]
    ) = \text{rank} (
    \left[
    \begin{array}{cc}
    \theta & \eta \\
    \theta & \eta \\
    \end{array}
    \right]) = 1,
\\
    \text{rank}
    (
    \left[
    \begin{array}{c}
    \bm{\varphi}_{21}^T \\
    \bm{\varphi}_{22}^T \bm{A}_3^T \\
    \end{array}
    \right] ) = \text{rank}(
    \left[
    \begin{array}{cc}
    \theta & \eta \\
    \theta / \mu & \eta / \mu \\
    \end{array}
    \right] )= 1,
\\
    \text{rank} (
    \left[
    \begin{array}{c}
    \bm{\varphi}_{21}^T \\
    \bm{\varphi}_{22}^T \bm{A}_2^T \\
    \end{array}
    \right]
    ) = \text{rank}
     (
    \left[
    \begin{array}{cc}
    \theta & \eta \\
    \frac{\theta - 1}{\lambda} & \eta / \lambda \\
    \end{array}
    \right] ) = 2.
\end{array}
\end{equation}
The lemma follows.
\end{proof}

\begin{figure}
\centering
\includegraphics[width=2.5in]{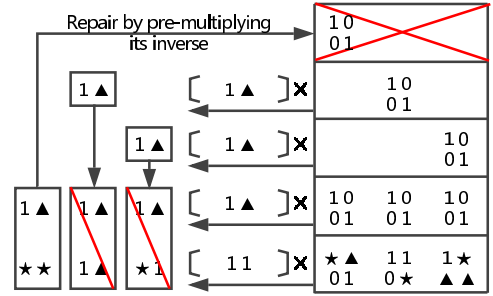}
\caption{Repair of the first systematic node}
\end{figure}

\begin{lemma} \label{1}
For $i = 1$, the conditions in (\ref{IAsys}) hold with $\bm{\varphi}_{12}^T =\bm{e}_1^T+ \bm{e}_2^T$ and $\bm{\varphi}_{11}^T = \bm{\varphi}_{12}^T \bm{A}_2^T$.
\end{lemma}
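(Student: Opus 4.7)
The plan is to mirror the strategy of the two previous lemmas: pick a convenient $\bm{\varphi}_{12}^T$, \emph{define} $\bm{\varphi}_{11}^T := \bm{\varphi}_{12}^T \bm{A}_2^T$ so that the $j=2$ rank-one condition of (\ref{IAsys}) holds for free (the two rows literally coincide), and then verify the remaining two rank conditions by $2\times 2$ computation. What distinguishes the $i=1$ case from $i=2,3$ is that no standard basis vector works as $\bm{\varphi}_{12}^T$: if one tried $\bm{e}_1^T$, aligning $\bm{\varphi}_{11}^T$ with $\bm{\varphi}_{12}^T\bm{A}_3^T$ would force $\eta=0$, which is excluded by (\ref{limi-d}); and $\bm{e}_2^T$ would make $\bm{\varphi}_{12}^T\bm{A}_1^T$ parallel to $\bm{\varphi}_{11}^T$, killing the rank-two ``signal'' condition. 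The cancellation provided by $\bm{\varphi}_{12}^T = \bm{e}_1^T+\bm{e}_2^T$ against the $-1$ entry of $\bm{A}_3^T$ is exactly what is needed.

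Concretely, I would first substitute the explicit forms in (\ref{A}) and compute the three row vectors $\bm{\varphi}_{12}^T\bm{A}_j^T$ for $j=1,2,3$. All three come out to be simple linear functions of $(\theta-1,\;\eta+1)$ and $(\theta,\;\eta+1)$; in particular, both $\bm{\varphi}_{11}^T = \bm{\varphi}_{12}^T\bm{A}_2^T$ and $\bm{\varphi}_{12}^T\bm{A}_3^T$ end up being nonzero scalar multiples of the same vector $(\theta-1,\;\eta+1)$, so they are parallel. Non-vanishing of this common direction is immediate from $\theta\neq 1$ in (\ref{limi-c}) and $\eta\neq -1$ in (\ref{limi-d}), so the rank for $j=3$ is exactly one.

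For the ``signal'' block $j=1$, I would form the $2\times 2$ matrix whose rows are $\bm{\varphi}_{11}^T$ and $\bm{\varphi}_{12}^T\bm{A}_1^T$ and evaluate its determinant. The common factor $(\eta+1)$ in the second column lets the two contributions collapse cleanly, producing an expression proportional to $(\eta+1)/\lambda$, which is nonzero by (\ref{limi-a}) and (\ref{limi-d}). This pins the rank at two and completes the verification of (\ref{IAsys}) for $i=1$.

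The main obstacle is conceptual rather than technical: discovering the right $\bm{\varphi}_{12}^T$, since the two ``obvious'' basis choices both fail. Once $\bm{e}_1^T + \bm{e}_2^T$ is on the table, the verification engages only the elementary restrictions $\lambda\neq 0$, $\theta\neq 1$, and $\eta\neq -1$; the finer conditions (\ref{limi-e})--(\ref{limi-j}) are not needed for this lemma and will instead be exploited in the subsequent analysis of parity-node repair.
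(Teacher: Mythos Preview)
Your proposal is correct and follows essentially the same approach as the paper: compute $\bm{\varphi}_{12}^T\bm{A}_j^T$ explicitly for $j=1,2,3$, observe that the $j=2$ and $j=3$ rows are both nonzero multiples of $(\theta-1,\;\eta+1)$ (giving the rank-one conditions), and check that the $j=1$ determinant equals $-(\eta+1)/\lambda\neq 0$. The paper's proof cites the same conditions $\lambda\neq 0$, $\theta\neq 1$, $\eta\neq -1$ (it also lists $\theta\neq 0$ and $\mu\neq 0$, though these are not strictly needed for the rank computations), so your identification of exactly which constraints from (\ref{limi}) are used is, if anything, slightly sharper.
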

\begin{proof}
It is clear that $\bm{e}_1^T \bm{\varphi}_{12} = \bm{e}_2^T \bm{\varphi}_{12} = 1$. Then we have
\begin{equation}
\begin{array}{l}
\bm{A}_1 \bm{\varphi}_{12} = \left[
    \begin{array}{c}
    \theta \bm{e}_1^T  \\
    \eta \bm{e}_1^T + \bm{e}_2^T \\
    \end{array}
    \right] \bm{\varphi}_{12} = \left[
    \begin{array}{c}
    \theta \\
    \eta + 1 \\
    \end{array}
    \right] \\
\bm{A}_2 \bm{\varphi}_{12} = \frac{1}{\lambda} \left[
    \begin{array}{c}
    (\theta-1) \bm{e}_1^T \\
    \eta \bm{e}_1^T + \bm{e}_2^T \\
    \end{array}
    \right] \bm{\varphi}_{12} = \frac{1}{\lambda} \left[
    \begin{array}{c}
    \theta - 1 \\
    \eta + 1 \\
    \end{array}
    \right] \\
\bm{A}_3 \bm{\varphi}_{12} = \frac{1}{\mu} \left[
    \begin{array}{c}
    \theta \bm{e}_1^T - \bm{e}_2^T\\
    \eta \bm{e}_1^T + \bm{e}_2^T \\
    \end{array}
    \right] \bm{\varphi}_{12} = \frac{1}{\mu} \left[
    \begin{array}{c}
    \theta - 1 \\
    \eta + 1 \\
    \end{array}
    \right]
\end{array}
\end{equation}
Since $\lambda, \mu \neq 0$, $\theta \neq 0$ in (\ref{limi-c}) and $\eta \neq -1$ in (\ref{limi-d}), we have
\begin{equation}
\begin{array}{c}
    \text{rank}
    (
    \left[
    \begin{array}{c}
    \bm{\varphi}_{11}^T \\
    \bm{\varphi}_{12}^T \bm{A}_2^T \\
    \end{array}
    \right]
    )
    = \text{rank} (
    \left[
    \begin{array}{cc}
    \frac{\theta - 1}{\lambda} & \frac{\eta + 1}{\lambda} \\
    \frac{\theta - 1}{\lambda} & \frac{\eta + 1}{\lambda} \\
    \end{array}
    \right]) = 1, \\
    \text{rank}
    (
    \left[
    \begin{array}{c}
    \bm{\varphi}_{11}^T \\
    \bm{\varphi}_{12}^T \bm{A}_3^T \\
    \end{array}
    \right]
    )
    = \text{rank}
    (
    \left[
    \begin{array}{cc}
    \frac{\theta - 1}{\lambda} & \frac{\eta + 1}{\lambda} \\
    \frac{\theta - 1}{\mu} & \frac{\eta + 1}{\mu} \\
    \end{array}
    \right]) = 1, \\
    \text{rank} (
    \left[
    \begin{array}{c}
    \bm{\varphi}_{11}^T \\
    \bm{\varphi}_{12}^T \bm{A}_1^T \\
    \end{array}
    \right] ) = \text{rank} (
     \left[
    \begin{array}{cc}
    \frac{\theta - 1}{\lambda} & \frac{\eta + 1}{\lambda} \\
    \theta & \eta + 1 \\
    \end{array}
    \right] ) = 2.
\end{array}
\end{equation}
\end{proof}

Based on Lemmas III.1--III.3, we readily arrive at
\begin{theorem}
Any systematic node can be repaired with minimum repair bandwidth.
\end{theorem}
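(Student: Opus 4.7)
The plan is to assemble the three preceding lemmas into a single repair protocol and then verify that (i) the newcomer can exactly reconstruct $\bm{m}_i$ and (ii) the total number of symbols downloaded matches $\gamma_{\min}=4$ from (\ref{lower1}).

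First I would fix an arbitrary failed systematic node $i \in \{1,2,3\}$ and pick the pair $(\bm{\varphi}_{i1}^T, \bm{\varphi}_{i2}^T)$ supplied by Lemma \ref{1}, \ref{2}, or \ref{3}, whichever is appropriate. Then I would form $d_{i1}=\bm{\varphi}_{i1}^T\bm{u}_{k+1}$ and $d_{i2}=\bm{\varphi}_{i2}^T\bm{u}_{k+2}$ exactly as in (\ref{Sysrep}). By the second condition in (\ref{IAsys}), for each surviving systematic index $j\neq i$ the two rows $\bm{\varphi}_{i1}^T$ and $\bm{\varphi}_{i2}^T\bm{A}_j^T$ span a common one-dimensional row space, so the interference contribution from $\bm{m}_j$ in (\ref{Sysrep}) lies in a single scalar direction. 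Hence downloading exactly one symbol $\bm{\varphi}_{i1}^T\bm{m}_j$ (equivalently, a scalar multiple of it) from systematic node $j$ suffices to cancel its contribution to both $d_{i1}$ and $d_{i2}$ simultaneously.

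After these cancellations the newcomer is left with a $2\times 1$ vector equal to
\[
\left[\begin{array}{c} \bm{\varphi}_{i1}^T \\ \bm{\varphi}_{i2}^T \bm{A}_i^T \end{array}\right]\bm{m}_i.
\]
By the first condition in (\ref{IAsys}) the $2\times 2$ coefficient matrix has full rank, so $\bm{m}_i$ is recovered by one $2\times 2$ linear inversion, yielding exact repair. Finally I would tally the bandwidth: one symbol from each of the two parity nodes and one symbol from each of the two surviving systematic nodes, for a total of $4$ symbols, which equals $\gamma_{\min}$ computed in (\ref{lower1}).

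There is no real obstacle here because Lemmas III.1--III.3 have already done the algebraic heavy lifting by exhibiting the repair vectors and checking the two rank conditions; the only care needed is the bookkeeping that the rank-$1$ alignment condition really does allow a \emph{single} scalar download per surviving systematic node (as opposed to two), and that the rank-$2$ signal condition really does permit exact (not merely functional) recovery of $\bm{m}_i$. Both follow immediately from (\ref{IAsys}), so the theorem is essentially a corollary of the three lemmas together with the bandwidth count.
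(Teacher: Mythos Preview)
Your proposal is correct and follows exactly the same approach as the paper: the theorem is stated as an immediate consequence of Lemmas~\ref{1}--\ref{3}, and the paper's own ``proof'' is simply the one-line remark that the result follows from those lemmas together with the interference-alignment framework of~(\ref{Sysrep})--(\ref{IAsys}). Your write-up merely makes explicit the bookkeeping (rank-$1$ alignment permits a single scalar download per surviving systematic node, rank-$2$ signal matrix permits exact inversion, total of $4$ symbols equals $\gamma_{\min}$) that the paper leaves implicit.
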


Use the proposed MDS codes over the finite field $\mathbb{F}_4$ in Tab. II as an example. Fig. 2 illustrates the optimal repair of the third systematic node, where we use $\blacktriangle$ to denote $w$ and $\bigstar$ to denote $w+1$. The generator matrix is then
\[
\bm{P} = \left[
        \begin{array}{cccccc}
        1 & 0 &   &   &   &  \\
        0 & 1 &   &   &   &   \\
          &   & 1 & 0 &   &   \\
          &   & 0 & 1 &   &   \\
          &   &   &   & 1 & 0 \\
          &   &   &   & 0 & 1 \\
        1 & 0 & 1 & 0 & 1 & 0 \\
        0 & 1 & 0 & 1 & 0 & 1 \\
        \bigstar & \blacktriangle & 1 & 1 & 1 & \bigstar \\
        0 & 1 & 0 & \bigstar & \blacktriangle & \blacktriangle \\
        \end{array}
\right]
\]
With $\bm{m} := [\bm{m}_1^{T},\; \bm{m}_2^{T}, \;\bm{m}_3^{T}]^{T}$, when the third systematic node fails,the newcomer needs to recover the symbols stored in it (i.e.,  symbols in $\bm{m}_3$). To this end, the newcomer downloads one symbol from each surviving node. Specifically, it uses coefficient vector $\bm{\varphi}_{31}^T = \bm{e}_3^T \bm{A}_1^T=[0 \;\; 1]$ to obtain a linear combination of two symbols at the first parity node:
\[
    d_{31}=\bm{\varphi}_{31}^T \bm{u_4} = [0 \;\; 1] \bm{m}_1 + [0 \;\; 1] \bm{m}_2 + [0 \;\; 1] \bm{m}_3.
\]
Similarly, the newcomer uses coefficient vector $\bm{\varphi}_{32}^T =[0 \;\; 1]$ to obtain a linear combination of two symbols at the second parity node:
\[
    d_{32}=\bm{\varphi}_{32}^T \bm{u_5} = [0 \;\; 1] \bm{m}_1 + [0 \;\; \bigstar] \bm{m}_2 + [\blacktriangle \;\; \blacktriangle] \bm{m}_3.
\]
Then the newcomer downloads one symbol $f_{31}:=[0 \;\; 1]\bm{m}_1$ and $f_{32}:=[0 \;\; 1]\bm{m}_2$ from the first and two systematic nodes, respectively. With the downloaded symbols, the newcomer can cancel out the interference and obtain
\[
\left[
    \begin{array}{c}
    d_{31} - f_{31} - f_{32} \\
    d_{32} - f_{31} - \bigstar \cdot f_{32} \\
    \end{array}
\right] =
\left[
    \begin{array}{cc}
    0 & 1 \\
    \blacktriangle & \blacktriangle \\
    \end{array}
    \right] \bm{m}_3.
\]
Since the matrix before $\bm{m}_3$ has full-rank, the desired symbols (i.e. symbols in $\bm{m}_3$) can be retrieved by pre-multiplying the inverse of this matrix.

Fig. 3 and Fig. 4 illustrate the optimal repair of the second and the first systematic node, respectively. Similar procedures can be performed to repair these two nodes using minimum bandwidth.

\subsection{Repair of the First Parity Node}

Different from systematic node case, interference alignment cannot be used directly for the repair of a parity node. Hence, we perform a change of basis as follows. Let
\begin{equation}
\begin{array}{lcl}
\bm{u} & = & \bm{P} \cdot \bm{m} = \left[
       \begin{matrix}
        \bm{I} & & \\
         & \bm{I} &\\
         & & \bm{I}\\
        \bm{I} & \bm{I} & \bm{I}\\
        \bm{A}_1^T & \bm{A}_2^T & \bm{A}_3^T\\
        \end{matrix}
        \right] \bm{m} \\
        & = & \left[
        \begin{matrix}
        \bm{I} & & \\
         & \bm{I} &\\
        -\bm{I} & -\bm{I} & \bm{I}\\
         & & \bm{I}\\
        \bm{A}_1^T - \bm{A}_3^T & \bm{A}_2^T - \bm{A}_3^T & \bm{A}_3^T\\
        \end{matrix}
        \right] \bm{m}' \\
\end{array}
\end{equation}
where
\[
\bm{m}' = \left[
        \begin{matrix}
        \bm{I} & & \\
         & \bm{I}&\\
        \bm{I} & \bm{I} & \bm{I}\\
        \end{matrix}
        \right] \bm{m}.
\]
Now the first parity node looks like a systematic node and the third systematic node looks like a parity node. To optimally repair the first parity node, the newcomer needs to download one symbol from each surviving node. Define $\bm{\varphi}_{41}^T$ and $\bm{\varphi}_{42}^T$, which are $1 \times 2$ row vectors used to access symbols from the third systematic node and the second parity node, respectively. The newcomer then obtains
\begin{equation}
\begin{array}{l}
\bm{\varphi}_{41}^{T}
    \left[
    \begin{array}{ccc}
    -\bm{I} & -\bm{I} & \bm{I}\\
    \end{array}
    \right] \bm{m}'
\\
\bm{\varphi}_{42}^{T}
    \left[
    \begin{array}{ccc}
    \bm{A}_1^T - \bm{A}_3^T & \bm{A}_2^T - \bm{A}_3^T & \bm{A}_3^T \\
    \end{array}
    \right] \bm{m}'.
\end{array}
\end{equation}
As with (\ref{IAsys}), the following conditions must be satisfied to perform interference alignment using the minimum repair bandwidth.
\begin{equation} \label{First}
\begin{array}{l}
    \text{rank}
    (
    \left[
    \begin{array}{c}
    \bm{\varphi}_{41}^T \\
    \bm{\varphi}_{42}^T (\bm{A}_1^T - \bm{A}_3^T) \\
    \end{array}
    \right] ) = 1
\\
    \text{rank}
    (
    \left[
    \begin{array}{c}
    \bm{\varphi}_{41}^T \\
    \bm{\varphi}_{42}^T (\bm{A}_2^T - \bm{A}_3^T) \\
    \end{array}
    \right] ) = 1
\\
    \text{rank}
    (
    \left[
    \begin{array}{c}
    \bm{\varphi}_{41}^T \\
    \bm{\varphi}_{42}^T \bm{A}_3^T \\
    \end{array}
    \right] ) = 2
\end{array}
\end{equation}

\begin{figure}
\centering
\includegraphics[width=2.5in]{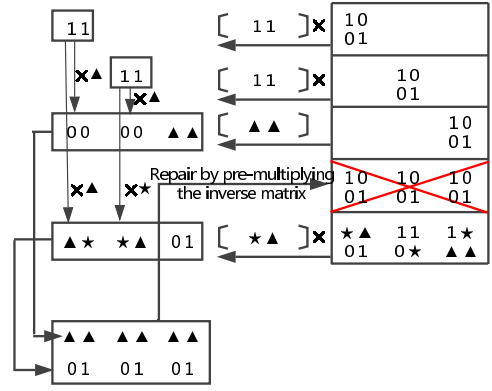}
\caption{Repair of the first parity node}
\end{figure}

We prove the following lemma.
\begin{lemma} \label{1st}
The conditions in (\ref{First}) hold with $\bm{\varphi}_{42}^T =(\mu -1)\bm{e}_2^T +(\lambda -1)\bm{e}_1^T$ and $\bm{\varphi}_{41}^T = \bm{\varphi}_{42}^T (\bm{A}_1^T - \bm{A}_3^T)$.
\end{lemma}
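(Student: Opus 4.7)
The plan is to exploit the relation (\ref{A123}) so that the three rank conditions all get expressed in terms of a single 2-vector $\bm{v}:=\bm{\varphi}_{42}^T\bm{A}_3^T$ together with the standard basis row $[1~~0]$, rather than pushing through raw $2\times 2$ matrix multiplications.

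First, the top condition in (\ref{First}) is free by design: since $\bm{\varphi}_{41}^T$ is defined to equal $\bm{\varphi}_{42}^T(\bm{A}_1^T-\bm{A}_3^T)$, the $2\times 2$ matrix in the first rank condition has two identical rows, so its rank is at most $1$. I will only need to verify that this common row is nonzero, which I will do at the end of the alignment step.

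Next, to align with the second block, I would rewrite (\ref{A123}) as
\[
\bm{A}_1^T-\bm{A}_3^T=(\mu-1)\bm{A}_3^T+[\bm{e}_2~~\bm{0}],\qquad \lambda(\bm{A}_2^T-\bm{A}_3^T)=(\mu-\lambda)\bm{A}_3^T+[\bm{e}_2-\bm{e}_1~~\bm{0}].
\]
Left-multiplying by $\bm{\varphi}_{42}^T=(\lambda-1)\bm{e}_1^T+(\mu-1)\bm{e}_2^T$, direct evaluation gives $\bm{\varphi}_{42}^T[\bm{e}_2~~\bm{0}]=(\mu-1)[1~~0]$ and $\bm{\varphi}_{42}^T[\bm{e}_2-\bm{e}_1~~\bm{0}]=(\mu-\lambda)[1~~0]$, so
\[
\bm{\varphi}_{41}^T=(\mu-1)\bigl(\bm{v}+[1~~0]\bigr),\qquad \bm{\varphi}_{42}^T(\bm{A}_2^T-\bm{A}_3^T)=\tfrac{\mu-\lambda}{\lambda}\bigl(\bm{v}+[1~~0]\bigr).
\]
Both rows are scalar multiples of the same vector $\bm{v}+[1~~0]$, which immediately forces the rank to be at most $1$ in both cases. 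The nonzero scalars $\mu-1$ and $(\mu-\lambda)/\lambda$ are guaranteed by (\ref{limi-b}); for the common row itself, I would compute $\bm{v}$ explicitly and check that its first component makes $\bm{v}+[1~~0]$ equal $\tfrac{1}{\mu}[(\lambda-1)\theta+1~~(\lambda-1)\eta+(\mu-1)]$, whose first entry is nonzero by (\ref{limi-f}). This pins the rank to exactly $1$ and discharges the first two conditions simultaneously.

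Finally, for the third condition I would argue by a change of basis for the row space spanned by the two rows $\bm{\varphi}_{41}^T$ and $\bm{v}$. Because $\bm{\varphi}_{41}^T-(\mu-1)\bm{v}=(\mu-1)[1~~0]$ with $\mu-1\neq 0$, the row space equals $\operatorname{span}\{\bm{v},[1~~0]\}$, whose dimension is $2$ iff the second component of $\bm{v}$ is nonzero; that second component is $\tfrac{1}{\mu}\bigl((\lambda-1)\eta+(\mu-1)\bigr)$, which is nonzero precisely by condition (\ref{limi-i}). The only real obstacle in the whole argument is bookkeeping: collapsing both $\bm{A}_1^T-\bm{A}_3^T$ and $\bm{A}_2^T-\bm{A}_3^T$ into a common "$\bm{A}_3^T$ plus rank-one correction" form via (\ref{A123}) is what makes the alignment transparent and identifies exactly which of the design conditions (\ref{limi-a})--(\ref{limi-j}) each of the three rank statements consumes, namely (\ref{limi-b}) and (\ref{limi-f}) for the two alignment rank-one conditions, and (\ref{limi-i}) for the rank-two condition.
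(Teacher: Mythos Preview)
Your proof is correct and follows essentially the same approach as the paper: both exploit the relation (\ref{A123}) to show that $\bm{\varphi}_{42}^T(\bm{A}_1^T-\bm{A}_3^T)$ and $\bm{\varphi}_{42}^T(\bm{A}_2^T-\bm{A}_3^T)$ are parallel, and both reduce the rank-$2$ condition to the nonvanishing of $(\lambda-1)\eta+(\mu-1)$ via (\ref{limi-i}). Your organization around the single vector $\bm{v}=\bm{\varphi}_{42}^T\bm{A}_3^T$ is a bit cleaner than the paper's explicit entrywise computations, and you certify nonvanishing of the common rank-$1$ row via its first entry and (\ref{limi-f}) whereas the paper uses the second entry and (\ref{limi-i}), but these are cosmetic differences.
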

\begin{proof}
It is easy to show:
\begin{equation} \label{First2}
\begin{array}{lcl}
\bm{A}_1 \bm{\varphi}_{42} & = & \left[
    \begin{array}{c}
    \theta \bm{e}_1^T  \\
    \eta \bm{e}_1^T + \bm{e}_2^T \\
    \end{array}
    \right] \bm{\varphi}_{42} \\
    & = & \left[
    \begin{array}{c}
    \theta (\lambda - 1) \\
    \eta (\lambda - 1) + \mu - 1 \\
    \end{array}
    \right], \\
\bm{A}_2 \bm{\varphi}_{42} & = & \frac{1}{\lambda} \left[
    \begin{array}{c}
    (\theta-1) \bm{e}_1^T \\
    \eta \bm{e}_1^T + \bm{e}_2^T \\
    \end{array}
    \right] \bm{\varphi}_{42} \\
    & = & \frac{1}{\lambda} \left[
    \begin{array}{c}
    (\theta - 1) (\lambda - 1) \\
    \eta (\lambda - 1) + \mu - 1 \\
    \end{array}
    \right], \\
\bm{A}_3 \bm{\varphi}_{42} & = & \frac{1}{\mu} \left[
    \begin{array}{c}
    \theta \bm{e}_1^T - \bm{e}_2^T\\
    \eta \bm{e}_1^T + \bm{e}_2^T \\
    \end{array}
    \right] \bm{\varphi}_{42} \\
    & = & \frac{1}{\mu} \left[
    \begin{array}{c}
    \theta (\lambda - 1) - (\mu - 1) \\
    \eta (\lambda - 1) + \mu - 1 \\
    \end{array}
    \right].
\end{array}
\end{equation}
Recall that $\bm{A}_1^T = \lambda \bm{A}_2^T + [\bm{e}_1 \;\; \bm{0}] = \mu \bm{A}_3^T + [\bm{e}_2 \;\; \bm{0}]$. Hence, we can further derive
\begin{equation}
\begin{array}{lcl}
(\bm{A}_1 - \bm{A}_3) \bm{\varphi}_{42} & = & \frac{\lambda(\mu - 1)}{\mu - \lambda} (\bm{A}_2 - \bm{A}_3) \bm{\varphi}_{42} \\
 & & + \left[
    \begin{array}{c}
    \frac{\mu - 1}{\mu - \lambda}\bm{e}_1^T + \frac{1 - \lambda}{\mu - \lambda} \bm{e}_2^T \\
    \bm{0}^T \\
    \end{array}
    \right] \bm{\varphi}_{42} \\
    & = & \frac{\lambda(\mu - 1)}{\mu - \lambda} (\bm{A}_2 - \bm{A}_3) \bm{\varphi}_{42}.
    \end{array}
\end{equation}
In addition, we can use (\ref{First2}) to get
\begin{equation}
(\bm{A}_1 - \bm{A}_3) \bm{\varphi}_{42} = \frac{\mu - 1}{\mu} \left[
    \begin{array}{c}
    \theta (\lambda - 1) + 1 \\
    \eta (\lambda - 1) + \mu - 1 \\
    \end{array}
    \right].
\end{equation}
As $\mu \neq 0, 1$ and $(\mu - 1) \neq \eta (1- \lambda)$ by (\ref{limi-i}), i.e., $ \eta (\lambda - 1) + (\mu - 1) \neq 0 $, we have $\text{rank} (\bm{\varphi}_{42}^T (\bm{A}_1^T - \bm{A}_3^T)) = 1 $. Hence,
\begin{equation}
\begin{array}{l}
\text{rank}
    (
    \left[
    \begin{array}{c}
    \bm{\varphi}_{41}^T \\
    \bm{\varphi}_{42}^T (\bm{A}_1^T - \bm{A}_3^T) \\
    \end{array}
    \right] ) \\
    \quad = \text{rank}
    (
    \left[
    \begin{array}{c}
    \bm{\varphi}_{42}^T (\bm{A}_1^T - \bm{A}_3^T) \\
    \bm{\varphi}_{42}^T (\bm{A}_1^T - \bm{A}_3^T) \\
    \end{array}
    \right] ) = 1,
\\
    \text{rank}
    (
    \left[
    \begin{array}{c}
    \bm{\varphi}_{41}^T \\
    \bm{\varphi}_{42}^T (\bm{A}_2^T - \bm{A}_3^T) \\
    \end{array}
    \right] ) \\
    \quad = \text{rank} (
    \left[
    \begin{array}{c}
    \bm{\varphi}_{42}^T (\bm{A}_1^T - \bm{A}_3^T) \\
    \frac{\mu - \lambda}{\lambda (\mu - 1)}\bm{\varphi}_{42}^T (\bm{A}_1^T - \bm{A}_3^T) \\
    \end{array}
    \right] ) = 1 .
\end{array}
\end{equation}
Since $\bm{A}_1 = \mu \bm{A}_3 + [\bm{e}_2 \;\; \bm{0}]^T$ (\ref{A123}), we have
\begin{equation}
\begin{array}{ll}
    & \text{rank} (
    \left[
    \begin{matrix}
    \bm{\varphi}_{41}^T \\
    \bm{\varphi}_{42}^T \bm{A}_3^T \\
    \end{matrix}
    \right] )
     =  \text{rank} (
    \left[
    \begin{matrix}
    (\bm{A}_1 - \bm{A}_3)\bm{\varphi}_{42} & \bm{A}_3 \bm{\varphi}_{42} \\
    \end{matrix}
    \right] ) \\
    &~~ =  \text{rank} (
    \left[
    \begin{matrix}
    (\bm{A}_1 - \bm{A}_3)\bm{\varphi}_{42} & [\bm{e}_2 \;\; \bm{0}]^T \bm{\varphi}_{42} \\
    \end{matrix}
    \right] ) \\
    & ~~= \text{rank} (
    \displaystyle \frac{\mu - 1}{\mu} \left[
    \begin{matrix}
    \theta (\lambda - 1)+1 & \mu \\
    (\lambda -1) \eta + (\mu - 1) & 0 \\
    \end{matrix}
    \right] ) = 2 .
\end{array}
\end{equation}
The last step holds since $\eta (\lambda - 1) + (\mu - 1) \neq 0$ and $\mu \neq 0, 1$.
\end{proof}

\begin{theorem}
The first parity node can be optimally repaired using minimum bandwidth.
\end{theorem}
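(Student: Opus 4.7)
The plan is to assemble the repair of the first parity node directly from Lemma \ref{1st}, which already certifies that the alignment conditions in (\ref{First}) are met by the explicit choice $\bm{\varphi}_{42}^T=(\mu-1)\bm{e}_2^T+(\lambda-1)\bm{e}_1^T$ and $\bm{\varphi}_{41}^T=\bm{\varphi}_{42}^T(\bm{A}_1^T-\bm{A}_3^T)$. In the transformed basis $\bm{m}'=[\bm{m}_1^T,\bm{m}_2^T,(\bm{m}_1+\bm{m}_2+\bm{m}_3)^T]^T$, the first parity node stores $\bm{m}'_3$, so the task reduces to showing that $\bm{m}'_3$ can be recovered by downloading exactly one symbol from each of the four surviving nodes.

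First, I would specify the four downloads. From the third systematic node the newcomer pulls $d_{41}=\bm{\varphi}_{41}^T(-\bm{m}'_1-\bm{m}'_2+\bm{m}'_3)$, and from the second parity node it pulls $d_{42}=\bm{\varphi}_{42}^T\bigl((\bm{A}_1^T-\bm{A}_3^T)\bm{m}'_1+(\bm{A}_2^T-\bm{A}_3^T)\bm{m}'_2+\bm{A}_3^T\bm{m}'_3\bigr)$. By the first rank condition in (\ref{First}) the two coefficient rows acting on $\bm{m}'_1$ in these two symbols are proportional; by the second condition the same holds for the rows acting on $\bm{m}'_2$. Hence a single linear combination of symbols stored at the first (respectively second) systematic node (and the stored $\bm{m}'_j=\bm{m}_j$ lives there in the raw form) suffices to reproduce the $\bm{m}'_1$-component (respectively $\bm{m}'_2$-component) appearing in the pair $(d_{41},d_{42})$. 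The newcomer therefore downloads one such scalar $f_{41}$ from systematic node 1 and one scalar $f_{42}$ from systematic node 2, and subtracts appropriate multiples from $d_{41}$ and $d_{42}$ to cancel the interference.

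After interference cancellation the residual is a $2\times 1$ vector obtained by applying the $2\times 2$ matrix $\bigl[\bm{\varphi}_{41}^T;\,\bm{\varphi}_{42}^T\bm{A}_3^T\bigr]$ to $\bm{m}'_3$. By the third rank condition in (\ref{First}) this matrix is invertible, so the newcomer retrieves $\bm{m}'_3$ by pre-multiplication by its inverse, which is exactly the data originally stored in the first parity node. Since a total of four symbols are downloaded — one per surviving node — this matches the bound $\gamma_{\min}=4$ in (\ref{lower1}), establishing optimality.

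The routine part is verifying that the two scalars $f_{41},f_{42}$ can indeed be chosen as linear functionals of the symbols physically stored at systematic nodes 1 and 2; this is immediate because those nodes store $\bm{m}_1$ and $\bm{m}_2$ in the clear, and any $1\times 2$ row vector defines an admissible one-symbol download. The only real content beyond Lemma \ref{1st} is the bookkeeping that ties $\bm{m}'_3$ back to the original $\bm{u}_4=\bm{m}_1+\bm{m}_2+\bm{m}_3$; no additional obstacle is expected.
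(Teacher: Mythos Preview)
Your proposal is correct and follows essentially the same approach as the paper: invoke Lemma \ref{1st} to certify the alignment conditions in (\ref{First}), then conclude that downloading one symbol from each surviving node and performing interference alignment recovers $\bm{m}'_3=\bm{u}_4$ with the optimal bandwidth $\gamma_{\min}=4$. The paper's own proof is a two-sentence appeal to Lemma \ref{1st}; you have simply spelled out the bookkeeping (the explicit downloads, the cancellation, and the final inversion) that the paper leaves to the reader from the preceding discussion.
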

\begin{proof}
By Lemma \ref{1st}, we have found $\bm{\varphi}_{41}^T$ and $\bm{\varphi}_{42}^T$ satisfying (\ref{First}). Hence, through downloading one symbol from each surviving node and performing interference alignment, the first parity node can be optimally repaired.
\end{proof}

Again, construct the proposed MDS codes over the finite field $\mathbb{F}_4$ in Tab. II. Fig. 5 illustrates the optimal repair of the first parity node with the same notations as in Fig. 2. Now the newcomer needs to recover the symbols stored in the first parity node, by downloading one symbol from each surviving node. It uses coefficient vector $\bm{\varphi}_{41}^T =[\blacktriangle \;\; \blacktriangle]$ to obtain a linear combination of two symbols at the third systematic node:
\[
    d_{41}=\bm{\varphi}_{41}^T \bm{u_4} = [\blacktriangle \;\; \blacktriangle] \bm{m}_3;
\]
and uses coefficient vector $\bm{\varphi}_{42}^T = [\bigstar \;\; \blacktriangle]$ to obtain a linear combination of two symbols at the second parity node:
\[
    d_{42}=\bm{\varphi}_{42}^T \bm{u_5} = [\blacktriangle \;\; \bigstar] \bm{m}_1 + [\bigstar \;\; \blacktriangle] \bm{m}_2 + [0 \;\; 1] \bm{m}_3.
\]
Then the newcomer downloads one symbol $f_{41}:=[1 \;\; 1]\bm{m}_1$ and $f_{42}:=[1 \;\; 1]\bm{m}_2$ from the first and two systematic nodes, respectively. With the downloaded symbols, the newcomer can align the interference and obtain
\[
\left[
    \begin{array}{c}
    d_{41} + \blacktriangle \cdot f_{41} + \blacktriangle \cdot f_{42} \\
    d_{42} + \blacktriangle \cdot f_{41} + \bigstar \cdot f_{42} \\
    \end{array}
\right] =
\left[
    \begin{array}{cc}
    \blacktriangle & \blacktriangle \\
    0 & 1 \\
    \end{array}
    \right] \sum_{i = 1,2,3} \bm{m}_i.
\]
Since the matrix before $\sum_{i = 1,2,3} \bm{m}_i$ has full-rank, the desired symbols (i.e. symbols in the first parity node, which are $\bm{u}_4=\bm{m}_1 + \bm{m}_2 + \bm{m}_3$) can be retrieved by pre-multiplying the inverse of this matrix.

\begin{figure}
\centering
\includegraphics[width=2.5in]{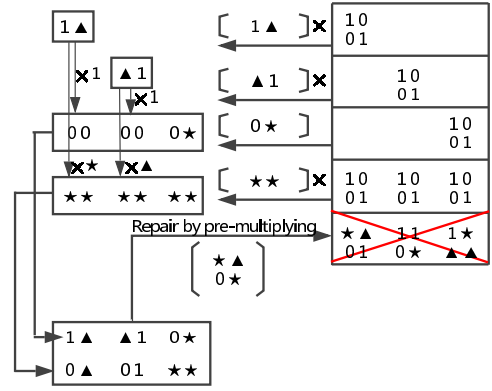}
\caption{Repair of the second parity node}
\end{figure}

\subsection{Repair of the Second Parity Node}

As with repairing the first parity node, a change of basis is needed. Since we have proven the invertibility of $\bm{A}_1^T$, $\bm{A}_2^T$ and $\bm{A}_3^T$ in Lemma \ref{invert}, we have
\begin{equation}
\begin{array}{l}
\bm{u} = \bm{P} \cdot \bm{m} = \left[
       \begin{matrix}
        \bm{I} & & \\
         & \bm{I} &\\
         & & \bm{I}\\
        \bm{I} & \bm{I} & \bm{I}\\
        \bm{A}_1^T & \bm{A}_2^T & \bm{A}_3^T\\
        \end{matrix}
        \right] \bm{m} \\
        = \left[
        \begin{matrix}
        \bm{I} & & \\
         & \bm{I} &\\
        -(\bm{A}_3^T)^{-1} \bm{A}_1^T & -(\bm{A}_3^T)^{-1} \bm{A}_2^T & (\bm{A}_3^T)^{-1} \\
        \bm{I} -(\bm{A}_3^T)^{-1} \bm{A}_1^T & \bm{I} -(\bm{A}_3^T)^{-1} \bm{A}_2^T & (\bm{A}_3^T)^{-1} \\
         & & \bm{I}\\
        \end{matrix}
        \right] \bm{m}'', \\
\end{array}
\end{equation}
where
\[
\bm{m}'' = \left[
        \begin{array}{ccc}
        \bm{I} & & \\
         & \bm{I} &\\
        \bm{A}_1^T & \bm{A}_2^T & \bm{A}_3^T\\
        \end{array}
        \right] \bm{m} .
\]
Now the second parity node looks like a systematic node. To optimally repair the second parity node, we need to access one symbol from each surviving node and interference alignment would be performed to repair the failed node.

Let $\bm{\varphi}_{51}^T = \bm{\varphi}_{51}^{'T} \bm{A}_3^T$ and $\bm{\varphi}_{52}^T = \bm{\varphi}_{52}^{'T} \bm{A}_3^T$, where $\bm{\varphi}_{51}^T$, $\bm{\varphi}_{51}^{'T}$, $\bm{\varphi}_{52}^T$ and $\bm{\varphi}_{52}^{'T}$ are all $1 \times 2$ row vectors. Using $\bm{\varphi}_{51}^T$ and $\bm{\varphi}_{52}^T$, the newcomer obtains two symbols from the third systematic node and the first parity node as follows:
\begin{equation}
\begin{array}{l}
    \bm{\varphi}_{51}^{T}
    \left[
    \begin{matrix}
    -(\bm{A}_3^T)^{-1} \bm{A}_1^T & -(\bm{A}_3^T)^{-1} \bm{A}_2^T & (\bm{A}_3^T)^{-1} \\
    \end{matrix}
    \right] \bm{m}'' \\ \quad = \bm{\varphi}_{51}^{'T} \left[
    \begin{matrix}
    -\bm{A}_1^T & -\bm{A}_2^T & \bm{I} \\
    \end{matrix}
    \right] \bm{m}'',
\\
    \bm{\varphi}_{52}^{T}
    \left[
    \begin{matrix}
    \bm{I} -(\bm{A}_3^T)^{-1} \bm{A}_1^T & \bm{I} -(\bm{A}_3^T)^{-1} \bm{A}_2^T & (\bm{A}_3^T)^{-1} \\
    \end{matrix}
    \right] \bm{m}'' \\ \quad= \bm{\varphi}_{52}^{'T} \left[
    \begin{matrix}
    \bm{A}_3^T - \bm{A}_1^T & \bm{A}_3^T - \bm{A}_2^T & \bm{I} \\
    \end{matrix}
    \right] \bm{m}''.
\end{array}
\end{equation}
The following equations should be satisfied in order to perform optimal repair of the second parity node [cf. (\ref{IAsys})]:
\begin{equation} \label{Second}
\begin{array}{l}
    \text{rank}
    (
    \left[
    \begin{array}{c}
    \bm{\varphi}_{51}^{'T} \bm{A}_1^T \\
    \bm{\varphi}_{52}^{'T} (\bm{A}_3^T - \bm{A}_1^T) \\
    \end{array}
    \right] ) = 1
\\
    \text{rank}
    (
    \left[
    \begin{array}{c}
    \bm{\varphi}_{51}^{'T} \bm{A}_2^T \\
    \bm{\varphi}_{52}^{'T} (\bm{A}_3^T - \bm{A}_2^T) \\
    \end{array}
    \right] ) = 1
\\
    \text{rank}
    (
    \left[
    \begin{array}{c}
    \bm{\varphi}_{51}^{'T} \\
    \bm{\varphi}_{52}^{'T} \\
    \end{array}
    \right] ) = 2
\end{array}
\end{equation}
To satisfy the last equality in (\ref{Second}), $\bm{\varphi}_{51}^{'T}$ and $\bm{\varphi}_{52}^{'T}$ must be non-zero vectors. By Lemmas \ref{invert} and \ref{MDS}, we have proven the invertibility of $\bm{A}_i^T$ and $\bm{A}_i^T-\bm{A}_j^T$, $(i \neq j)$. Hence, $\bm{\varphi}_{51}^{'T} \bm{A}_1^T$, $\bm{\varphi}_{51}^{'T} \bm{A}_2^T$, $\bm{\varphi}_{52}^{'T} (\bm{A}_3^T - \bm{A}_1^T)$ and $\bm{\varphi}_{52}^{'T} (\bm{A}_3^T - \bm{A}_2^T)$ are all non-zero vectors. In order to satisfy the first two equalities in (\ref{Second}), we need
\begin{equation} \label{lin}
\begin{array}{c}
\bm{A}_1 \bm{\varphi}_{51}^{'} = c (\bm{A}_3 - \bm{A}_1) \bm{\varphi}_{52}^{'}\\
\bm{A}_2 \bm{\varphi}_{51}^{'} = \tilde{c} (\bm{A}_3 - \bm{A}_2) \bm{\varphi}_{52}^{'},
\end{array}
\end{equation}
where $c, \tilde{c} \neq 0$. That is,
\begin{equation} \label{p1}
\bm{A}_1^{-1} (\bm{A}_3 - \bm{A}_1) \bm{\varphi}_{52}^{'} = p \bm{A}_2^{-1} (\bm{A}_3 - \bm{A}_2) \bm{\varphi}_{52}^{'},
\end{equation}
where $p = \tilde{c} / c$. Rewrite the equation as
\begin{equation} \label{p}
((\bm{A}_1^{-1} \bm{A}_3 - \bm{I}) - p(\bm{A}_2^{-1} \bm{A}_3 - \bm{I}))\bm{\varphi}_{52}^{'} = \bm{0},
\end{equation}
where $\bm{0}$ denotes $2 \times 2$ zero matrix. To obtain a non-zero solution $\bm{\varphi}_{52}^{'}$ for (\ref{p}), we need
\begin{equation} \label{assu2}
\text{rank} ((\bm{A}_1^{-1} \bm{A}_3 - \bm{I}) - p(\bm{A}_2^{-1} \bm{A}_3 - \bm{I})) = 1.
\end{equation}

Following the aforementioned lines, we can then find $\bm{\varphi}_{51}^{'}$ and $\bm{\varphi}_{52}^{'}$ satisfying (\ref{Second}) as follows:
\begin{enumerate}
    \item Find $p \neq 0$ satisfying (\ref{assu2}).
    \item Obtain a non-zero solution $\bm{\varphi}_{52}^{'}$ for (\ref{p}) such that (\ref{p1}) holds.
    \item Let $\bm{\varphi}_{51}^{'} = \bm{A}_1^{-1} (\bm{A}_3 - \bm{A}_1) \bm{\varphi}_{52}^{'}$; then (\ref{lin}) as well as the first two equalities in (\ref{Second}) hold .
    \item Check if the last equality in (\ref{Second}) holds.
\end{enumerate}

The following three lemmas show that the aforementioned procedure can successfully obtain the desired $\bm{\varphi}_{51}^{'}$ and $\bm{\varphi}_{52}^{'}$.

\begin{lemma} \label{pex}
Condition (\ref{assu2}) holds with $p= \frac{\mu - 1}{\mu - \lambda}$.
\end{lemma}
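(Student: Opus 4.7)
The plan is direct verification, exploiting the fact that $M(p) := (\bm{A}_1^{-1}\bm{A}_3 - \bm{I}) - p(\bm{A}_2^{-1}\bm{A}_3 - \bm{I})$ is a $2 \times 2$ matrix pencil. By conditions (\ref{limi-a}) and (\ref{limi-c}), $\det \bm{A}_1 = \theta$ and $\det \bm{A}_2 = (\theta-1)/\lambda^2$ are nonzero, so both inverses are well defined. A short computation using (\ref{A}) yields
\begin{equation*}
B_1 := \bm{A}_1^{-1}\bm{A}_3 - \bm{I} = \frac{1}{\theta\mu}\begin{bmatrix} \theta(1-\mu) & -1 \\ 0 & \theta(1-\mu)+\eta \end{bmatrix},
\end{equation*}
where the zero in the lower-left corner arises from the shared $\bm{e}_1$-pattern that $\bm{A}_1^{-1}$ and $\bm{A}_3$ inherit from (\ref{A}); an analogous closed form holds for $B_2 := \bm{A}_2^{-1}\bm{A}_3 - \bm{I}$ with common scalar $1/[\mu(\theta-1)]$.

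Since $M(p)$ is $2\times 2$, the claim $\mathrm{rank}\,M(p^{\ast})=1$ at $p^{\ast}:=(\mu-1)/(\mu-\lambda)$ splits cleanly into (i) $\det M(p^{\ast})=0$ and (ii) $M(p^{\ast}) \neq \bm{0}$. For (ii), the $(2,1)$-entry of $B_1$ is zero and that of $B_2$ equals $-\lambda\eta/[\mu(\theta-1)]$, so
\begin{equation*}
\bigl[M(p^{\ast})\bigr]_{21} = \frac{p^{\ast}\lambda\eta}{\mu(\theta-1)} = \frac{(\mu-1)\lambda\eta}{(\mu-\lambda)\mu(\theta-1)},
\end{equation*}
which is nonzero by conditions (\ref{limi-a})--(\ref{limi-d}) and $\mu\neq\lambda$ from (\ref{limi-b}). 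Hence $\mathrm{rank}\,M(p^{\ast})\geq 1$.

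For (i), I would apply the $2\times 2$ pencil identity $\det(B_1-pB_2)=\det B_1 - p\,\mathrm{tr}(\mathrm{adj}(B_1)\,B_2) + p^2\det B_2$, clear the common denominator $\theta^2\mu^2(\theta-1)^2$ so that the result is a polynomial in $p$ (and in $\theta,\eta,\lambda,\mu$), and verify by direct substitution that this polynomial vanishes at $p=p^{\ast}$; equivalently, that it is divisible by $(\mu-\lambda)p-(\mu-1)$. The main obstacle is the algebraic bookkeeping: four symbolic parameters routed through several rational scalars make the raw expansion bulky. To keep it tractable I would proceed in layers---first substitute the explicit entries of $B_1$ and $B_2$ into the pencil identity, then multiply through by the common denominator, and only at the end substitute $p^{\ast}$ and simplify. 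Combining (i) and (ii) delivers $\mathrm{rank}\,M(p^{\ast}) = 1$, which is exactly (\ref{assu2}).
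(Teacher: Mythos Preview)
Your plan is sound and would succeed, but it takes a different and heavier route than the paper. The paper's key simplification is to left-multiply $M(p)$ by the invertible matrix $\bm{A}_1$, which preserves rank and converts the expression into $(p-1)\bm{A}_1 + \bm{A}_3 - p\,\bm{A}_1\bm{A}_2^{-1}\bm{A}_3$. The point of this move is structural: all three of $\bm{A}_1$, $\bm{A}_3$, and $\bm{A}_1\bm{A}_2^{-1}\bm{A}_3$ share the same second-row direction $\eta\bm{e}_1^T+\bm{e}_2^T$ (with scalar weights $1$, $1/\mu$, $\lambda/\mu$ respectively), so the second row of the combination is $\bigl[(p-1)+1/\mu-p\lambda/\mu\bigr](\eta\bm{e}_1^T+\bm{e}_2^T)$, and this scalar coefficient equals $[p(\mu-\lambda)-(\mu-1)]/\mu$, which vanishes exactly at $p^\ast=(\mu-1)/(\mu-\lambda)$. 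That makes $\det M(p^\ast)=0$ immediate with no pencil expansion at all, and $\mathrm{rank}\geq 1$ follows from the surviving first-row entry $-\lambda\theta(\mu-1)/[\mu(\mu-\lambda)(\theta-1)]$, nonzero by (\ref{limi-a})--(\ref{limi-c}). Your approach instead computes $B_1,B_2$ directly and handles the determinant via the generic quadratic pencil identity; this is correct, and your rank-$\geq 1$ argument via the $(2,1)$-entry is clean, but the ``algebraic bookkeeping'' you flag as the main obstacle is precisely what the paper's left-multiplication trick eliminates.
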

\begin{proof}
First, we have
\begin{equation}
\begin{array}{lcl}
    & & \text{rank}((\bm{A}_1^{-1} \bm{A}_3 - \bm{I}) - p(\bm{A}_2^{-1} \bm{A}_3 - \bm{I})) \\
    & = & \text{rank}(\bm{A}_1 \cdot ((\bm{A}_1^{-1} \bm{A}_3 - \bm{I}) - p(\bm{A}_2^{-1} \bm{A}_3 - \bm{I}))) \\
    & = & \text{rank}((p-1)\bm{A}_1 + \bm{A}_3 - p \bm{A}_1 \bm{A}_2^{-1} \bm{A}_3)
\end{array}
\end{equation}
By (\ref{A}),
\begin{equation}
\bm{A}_2 = \frac{1}{\lambda} \left[
    \begin{array}{c}
    (\theta -1) \bm{e}_1^T \\
    \eta \bm{e}_1^T + \bm{e}_2^T\\
    \end{array}
    \right].
\end{equation}
Denote the inverse of $\bm{A}_2$ by $ \bm{A}_2^{-1} := [\bm{b}_1 \quad \bm{b}_2]$, where $\bm{b}_1$ and $\bm{b}_2$ are $2 \times 1$ vectors. By $\bm{A}_2 \bm{A}_2^{-1} = \bm{I}$, we readily have
\begin{equation}
\begin{array}{rcl}
\bm{e}_1^T  \bm{b}_1 & = & \frac{\lambda}{\theta - 1}, \\
\bm{e}_1^T \bm{b}_2 & = & 0 ,\\
(\eta \bm{e}_1^T + \bm{e}_2^T) \bm{b}_1 & = & 0, \\
(\eta \bm{e}_1^T + \bm{e}_2^T) \bm{b}_2 & = & \lambda. \\
\end{array}
\end{equation}
Therefore,
\begin{equation}
\begin{array}{rcl}
\bm{A}_1 \bm{A}_2^{-1} \bm{A}_3 & = & \left[
\begin{array}{c}
\theta \bm{e}_1^T \\
\eta \bm{e}_1^T + \bm{e}_2^T \\
\end{array}
\right] [\bm{b}_1 \quad \bm{b}_2] \bm{A}_3 \\
& = & \left[
\begin{array}{cc}
\frac{\theta \lambda}{\theta - 1} & 0 \\
0 & \lambda \\
\end{array}
\right] \frac{1}{\mu} \left[
\begin{array}{c}
\theta \bm{e}_1^T - \bm{e}_2^T \\
\eta \bm{e}_1^T + \bm{e}_2^T \\
\end{array}
\right] \\
& = & \frac{\lambda}{\mu} \left[
    \begin{array}{c}
    \frac{\theta}{\theta -1}(\theta \bm{e}_1^T - \bm{e}_2^T ) \\
    \eta \bm{e}_1^T + \bm{e}_2^T \\
    \end{array}
    \right]
\end{array}
\end{equation}
With $p= \frac{\mu - 1}{\mu - \lambda}$, we have
\begin{equation}
\begin{array} {lcl}
    & & (p-1)\bm{A}_1 + \bm{A}_3 - p \bm{A}_1 \bm{A}_2^{-1} \bm{A}_3 \\
   & = & \frac{\lambda - 1}{\mu - \lambda}  \left[
        \begin{array}{c}
        \theta \bm{e}_1^T \\
        \eta \bm{e}_1^T + \bm{e}_2^T \\
        \end{array}
        \right] + \frac{1}{\mu} \left[
        \begin{array}{c}
        \theta \bm{e}_1^T - \bm{e}_2^T \\
        \eta \bm{e}_1^T + \bm{e}_2^T \\
        \end{array}
        \right] \\
        & &  - \frac{ \lambda (\mu - 1)}{\mu (\mu - \lambda)} \left[
        \begin{array}{c}
        \frac{\theta}{\theta -1}(\theta \bm{e}_1^T - \bm{e}_2^T ) \\
        \eta \bm{e}_1^T + \bm{e}_2^T \\
        \end{array}
        \right] \\
   & = & \left[
        \begin{array}{c}
        \frac{-\lambda \theta (\mu - 1)}{\mu(\mu - \lambda)(\theta - 1)} \bm{e}_1^T + (\frac{-1}{\mu} +  \frac{1}{\mu} \frac{\mu - 1}{\mu - \lambda} \frac{\theta \lambda}{\theta - 1}) \bm{e}_2^T\\
        \bm{0}^T \\
        \end{array}
        \right] \\
\end{array}
\end{equation}
Hence, $\text{rank} ((\bm{A}_1^{-1} \bm{A}_3 - \bm{I}) - p(\bm{A}_2^{-1} \bm{A}_3 - \bm{I})) = \text{rank}((p-1)\bm{A}_1 + \bm{A}_3 - p \bm{A}_1 \bm{A}_2^{-1} \bm{A}_3) = 1$.
\end{proof}

\begin{lemma} \label{IF}
For two vectors $\bm{v}_1=c_1 \bm{e}_1 + d_1 \bm{e}_2$ and $\bm{s}_2=c_2 \bm{e}_1 + d_2 \bm{e}_2$, where $c_1$, $c_2$, $d_1$, $d_2$ are elements over the finite field $\mathbb{F}_q$ and $c_1 d_2 - c_2 d_1 \neq 0$, then we have $\bm{v}_1^T \bm{\omega} = 0 $ and $\bm{v}_2^T \bm{\omega} \neq 0$ with $\bm{\omega} = c_1 \bm{e}_2 - d_1 \bm{e}_1$.
\end{lemma}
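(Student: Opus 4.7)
The plan is to prove Lemma \ref{IF} by a direct inner-product computation, since the construction of $\bm{\omega}$ is precisely the rotation/swap used to produce a vector orthogonal to $\bm{v}_1$. I would first note that the standard basis satisfies $\bm{e}_i^T \bm{e}_j = \delta_{ij}$, which works identically over any field (including $\mathbb{F}_4$, where $-1=1$).

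First, I would expand
\[
\bm{v}_1^T \bm{\omega} = (c_1 \bm{e}_1 + d_1 \bm{e}_2)^T (c_1 \bm{e}_2 - d_1 \bm{e}_1) = -c_1 d_1 + d_1 c_1 = 0,
\]
using $\bm{e}_1^T \bm{e}_2 = \bm{e}_2^T \bm{e}_1 = 0$ to kill the off-diagonal contributions. Then, the same expansion applied to $\bm{v}_2$ gives
\[
\bm{v}_2^T \bm{\omega} = (c_2 \bm{e}_1 + d_2 \bm{e}_2)^T (c_1 \bm{e}_2 - d_1 \bm{e}_1) = c_1 d_2 - c_2 d_1,
\]
and the hypothesis $c_1 d_2 - c_2 d_1 \neq 0$ immediately delivers the nonvanishing conclusion.

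There is essentially no obstacle here; the lemma is a two-line $2\times 2$ determinant identity dressed up as a vector statement. The only minor thing worth flagging is to reassure the reader that the argument is field-agnostic, so that when the lemma is later applied with coefficients coming from $\mathbb{F}_4$ (where $c_1 d_2 - c_2 d_1$ is the $2\times 2$ Galois-field determinant of $[\bm{v}_1 \;\; \bm{v}_2]$), the nondegeneracy hypothesis is exactly the statement that $\bm{v}_1$ and $\bm{v}_2$ are linearly independent. In the sequel this lemma will be used to construct a vector $\bm{\omega}$ aligned with one direction and transverse to another, which is exactly what the interference-alignment step for the second parity node requires.
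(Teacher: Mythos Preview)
Your proposal is correct and essentially identical to the paper's own proof: both proceed by direct expansion of $\bm{v}_1^T\bm{\omega}$ and $\bm{v}_2^T\bm{\omega}$ using $\bm{e}_i^T\bm{e}_j=\delta_{ij}$, obtaining $-c_1d_1+d_1c_1=0$ and $c_1d_2-c_2d_1\neq 0$ respectively. Your added remarks on field-agnosticism and the determinant interpretation are accurate but go slightly beyond what the paper records.
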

\begin{proof}
This can be shown by direct calculations:
\begin{equation}
\begin{array}{lcl}
\bm{v}_1^T \bm{\omega} & = & (c_1 \bm{e}_1^T + d_1 \bm{e}_2^T)(c_1 \bm{e}_2 - d_1 \bm{e}_1) \\
& = & - c_1 d_1 + d_1 c_1 = 0 ,\\
\bm{v}_2^T \bm{\omega} & = & (c_2 \bm{e}_1^T + d_2 \bm{e}_2^T)(c_1 \bm{e}_2 - d_1 \bm{e}_1) \\
& = & c_1 d_2 - c_2 d_1  \neq 0.
\end{array}
\end{equation}
\end{proof}

\begin{lemma} \label{sec}
There exist $\bm{\varphi}_{51}^{'}$ and $\bm{\varphi}_{52}^{'}$ satisfying (\ref{Second}).
\end{lemma}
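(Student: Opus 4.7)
The plan is to execute the four-step recipe stated immediately before the lemma, using Lemmas \ref{pex} and \ref{IF} as the main engines and Lemmas \ref{invert} and \ref{MDS} to supply the needed invertibilities. First, I would fix $p := \frac{\mu - 1}{\mu - \lambda}$, which is well-defined and nonzero by condition (\ref{limi-b}). Lemma \ref{pex} then guarantees that $\bm{M} := (\bm{A}_1^{-1}\bm{A}_3 - \bm{I}) - p(\bm{A}_2^{-1}\bm{A}_3 - \bm{I})$ has rank exactly $1$. The explicit computation in that proof shows moreover that $\bm{A}_1 \bm{M}$ has a zero second row, with its first row of the form $\alpha_1 \bm{e}_1^T + \alpha_2 \bm{e}_2^T$ where $(\alpha_1, \alpha_2) \neq (0,0)$; since $\bm{A}_1$ is invertible, the kernel of $\bm{M}$ coincides with the kernel of that single nonzero row, and applying Lemma \ref{IF} with $\bm{v}_1 := \alpha_1 \bm{e}_1 + \alpha_2 \bm{e}_2$ produces a nonzero candidate $\bm{\varphi}_{52}^{'} := \alpha_1 \bm{e}_2 - \alpha_2 \bm{e}_1$ that solves (\ref{p}) and hence satisfies (\ref{p1}).

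I would then set $\bm{\varphi}_{51}^{'} := (\bm{A}_1^{-1}\bm{A}_3 - \bm{I})\bm{\varphi}_{52}^{'}$. Because $\bm{A}_1 - \bm{A}_3$ is invertible by Lemma \ref{MDS}, this vector is nonzero, and (\ref{lin}) now holds by construction; each of the first two matrices appearing in (\ref{Second}) therefore consists of two nonzero rows that are scalar multiples of a single vector, forcing both ranks to equal $1$.

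The only step requiring genuine verification, and which I expect to be the main obstacle, is the third rank equality in (\ref{Second}): the linear independence of $\bm{\varphi}_{51}^{'}$ and $\bm{\varphi}_{52}^{'}$, equivalently the statement that $\bm{\varphi}_{52}^{'}$ is not an eigenvector of $\bm{A}_1^{-1}\bm{A}_3$. This is where the last and most elaborate design condition (\ref{limi-j}) must finally be invoked, since it is the only one of (\ref{limi-a})--(\ref{limi-j}) not yet used in the preceding lemmas. Using the closed-form values of $\alpha_1, \alpha_2$ read off from the proof of Lemma \ref{pex}, together with the explicit (upper-triangular) form of $\bm{A}_1^{-1}\bm{A}_3$ that comes from $\bm{A}_1$ and $\bm{A}_3$ sharing the same shape, the $2 \times 2$ determinant of $[\bm{\varphi}_{51}^{'} \;\; \bm{\varphi}_{52}^{'}]$ should reduce, after a direct computation, to a nonzero scalar times $\alpha_1 - \eta \alpha_2$; a routine but tedious rearrangement then identifies the inequality $\alpha_1 - \eta \alpha_2 \ne 0$ with condition (\ref{limi-j}). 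The conceptual takeaway is that the conditions on $\{\lambda, \mu, \theta, \eta\}$ have been engineered so that precisely this final determinant is forced to be nonzero.
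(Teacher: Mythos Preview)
Your proposal is correct and follows essentially the same four-step recipe as the paper: fix $p=\frac{\mu-1}{\mu-\lambda}$, take $\bm{\varphi}_{52}^{'}$ in the kernel of the rank-one matrix from Lemma~\ref{pex} via Lemma~\ref{IF}, set $\bm{\varphi}_{51}^{'}=\bm{A}_1^{-1}(\bm{A}_3-\bm{A}_1)\bm{\varphi}_{52}^{'}$, and then verify the rank-$2$ condition using (\ref{limi-j}). The only difference is in how the last step is carried out: the paper uses the relation $\bm{A}_1^T=\mu\bm{A}_3^T+[\bm{e}_2\;\bm{0}]$ from (\ref{A123}) and a chain of rank-preserving row operations to reduce to a triangular $2\times2$ matrix whose diagonal entries are $\bm{\varphi}_{52}^{'T}\bm{e}_2=c_1$ and $\bm{\varphi}_{52}^{'T}(\eta\bm{e}_1+\bm{e}_2)=c_1-\eta d_1$, whereas you propose to compute $\det[\bm{\varphi}_{51}^{'}\;\bm{\varphi}_{52}^{'}]$ directly from the upper-triangular form of $\bm{A}_1^{-1}\bm{A}_3$; either way the determinant is a nonzero multiple of $c_1(c_1-\eta d_1)$, so one needs both $c_1\neq0$ (from (\ref{limi-a})--(\ref{limi-c})) and $c_1-\eta d_1\neq0$ (from (\ref{limi-j})), not only the latter.
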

\begin{proof}
Let $c_1 = \frac{-\lambda \theta (\mu - 1)}{\mu(\mu - \lambda)(\theta - 1)}$, $d_1 = \frac{-1}{\mu} + \frac{\mu - 1}{\mu - \lambda} \frac{\theta \lambda}{\theta - 1} \frac{1}{\mu}$, $c_2 = \eta$ and $d_2 = 1$. Then
\begin{equation}
\begin{array}{lcl}
& & c_1 d_2 - c_2 d_1 \\
& = & \frac{-\lambda \theta (\mu - 1)}{\mu(\mu - \lambda)(\theta - 1)} - \eta (\frac{-1}{\mu} + \frac{\mu - 1}{\mu - \lambda} \frac{\theta \lambda}{\theta - 1} \frac{1}{\mu}) \\
& = & \frac{\lambda \theta (\mu - 1)}{\mu(\mu - \lambda)(\theta - 1)} (\eta \frac{(\theta -1)(\mu - \lambda)}{\theta \lambda(\mu - 1)} - (\eta + 1))
\end{array}
\end{equation}
By (\ref{limi-a})--(\ref{limi-c}) and (\ref{limi-j}), we can derive $c_1 d_2 - c_2 d_1 \neq 0$. Let $\bm{v}_1 = c_1 \bm{e}_1 + d_1 \bm{e}_2$, $\bm{v}_2 = c_2 \bm{e}_1 + d_2 \bm{e}_2$. In the proof of Lemma \ref{pex}, we show that when $p= \frac{\mu - 1}{\mu - \lambda}$,
\begin{equation}
(p-1)\bm{A}_1 + \bm{A}_3 - p \bm{A}_1 \bm{A}_2^{-1} \bm{A}_3 = \left[
\begin{array}{c}
    \bm{v}_1^T \\
    \bm{0}^T \\
\end{array}
\right],
\end{equation}
It follows from Lemma \ref{IF} that if $c_1 d_2 - c_2 d_1 \neq 0$, then $\bm{\omega} = c_1 \bm{e}_2 - d_1 \bm{e}_1$ satisfies  $\bm{v}_1^T \bm{\omega} = 0$ and $\bm{v}_2^T \bm{\omega} \neq 0$.

Let $\bm{\varphi}_{52}^{'} = \bm{\omega}$, then $\bm{v}_1^T \bm{\varphi}_{52}^{'} = \bm{v}_1^T  \bm{\omega} = 0$ and $\bm{v}_2^T \bm{\varphi}_{52}^{'} = \bm{v}_2^T \bm{\omega} \neq 0$. Hence, $\bm{\varphi}_{52}^{'}$ is in fact a non-zero solution for (\ref{p}) since
\begin{equation}
\begin{array}{lcl}
& & ((\bm{A}_1^{-1} \bm{A}_3 - \bm{I}) - p(\bm{A}_2^{-1} \bm{A}_3 - \bm{I}))\bm{\varphi}_{52}^{'} \\
 & = & \bm{A}_1^{-1}\left[
\begin{array}{c}
    \bm{v}_1^T \\
    \bm{0}^T \\
\end{array}
\right] \bm{\varphi}_{52}^{'} = \bm{0}
\end{array}
\end{equation}

Let $\bm{\varphi}_{51}^{'} = \bm{A}_1^{-1}(\bm{A}_3 - \bm{A}_1) \bm{\varphi}_{52}^{'}$. Then the first two equalities in (\ref{Second}) are met. Let $c_2^{'} = 0$, $d_2^{'} = 1$, and $\bm{v}_2^{'} = c_2^{'} \bm{e}_1 + d_2^{'} \bm{e}_2 = \bm{e}_2$. By (\ref{limi-a})--(\ref{limi-c}), we derive
\begin{equation}
c_1 d_2^{'} - c_2^{'} d_1 = c_1 = \frac{-\lambda \theta (\mu - 1)}{\mu(\mu - \lambda)(\theta - 1)} \neq 0
\end{equation}
Again, Lemma \ref{IF} implies that $\bm{\omega} = c_1 \bm{e}_2 - d_1 \bm{e}_1$ satisfies $\bm{v}_2^{'T} \bm{\omega} = \bm{e}_2^T \bm{\varphi}_{52}^{'} \neq 0$, as $c_1 d_2^{'} - c_2^{'} d_1 \neq 0$. Since we also have $\bm{A}_1^T=\mu \bm{A}_3^T + [\bm{e}_2 \;\; \bm{0}]$ in (\ref{A123}) and $\bm{v}_2^T \bm{\varphi}_{52}^{'} = \bm{v}_2^T \bm{\omega} \neq 0$, it readily follows.
\begin{equation}
\begin{array}{ll}
\text{rank}(
    \left[
    \begin{array}{c}
    \bm{\varphi}_{51}^{'T} \\
    \bm{\varphi}_{52}^{'T} \\
    \end{array}
    \right] ) & = \text{rank}(\left[
    \begin{array}{c}
    \bm{\varphi}_{52}^{'T} (\bm{A}_3 - \bm{A}_1)^T (\bm{A}_1^T)^{-1}\\
    \bm{\varphi}_{52}^{'T} \\
    \end{array}
    \right]) \\
    & = \text{rank}(\left[
    \begin{array}{c}
    \bm{\varphi}_{52}^{'T} (\bm{A}_3^T - \bm{A}_1^T)\\
    \bm{\varphi}_{52}^{'T} \bm{A}_1^T \\
    \end{array}
    \right]) \\
     & = \text{rank}(\left[
    \begin{array}{c}
    \bm{\varphi}_{52}^{'T} \bm{A}_3^T\\
    \bm{\varphi}_{52}^{'T} \bm{A}_1^T \\
    \end{array}
    \right]) \\
    & = \text{rank}(\left[
    \begin{array}{c}
    \bm{\varphi}_{52}^{'T} [\bm{e}_2 \;\; \bm{0}]\\
    \bm{\varphi}_{52}^{'T} \bm{A}_1^T \\
    \end{array}
    \right]) \\
     & = \text{rank}(
    \left[
    \begin{array}{cc}
    \bm{\varphi}_{52}^{'T} \cdot \bm{v}_2^{'}  & 0 \\
    \theta \bm{\varphi}_{52}^{'T} \cdot \bm{x} &  \bm{\varphi}_{52}^{'T} \cdot \bm{v}_2 \\
    \end{array}
    \right] ) \\
    & = 2;
    \end{array}\nonumber
\end{equation}
i.e., the last equality in (\ref{Second}) is satisfied
\end{proof}

\begin{theorem}
The second parity node can be optimally repaired.
\end{theorem}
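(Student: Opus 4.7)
The plan is to treat this theorem as essentially a direct corollary of Lemma \ref{sec}, since that lemma has already discharged all the delicate linear-algebraic work. After the change of basis displayed at the start of Sec. III-C, the second parity node plays the role of a ``systematic'' node in the new representation $\bm{m}''$, while the third systematic node and the first parity node now play the role of ``parity'' nodes from the newcomer's perspective. Lemma \ref{sec} produces explicit row vectors $\bm{\varphi}_{51}^{'T}$ and $\bm{\varphi}_{52}^{'T}$ satisfying all three rank conditions in (\ref{Second}), so I would set $\bm{\varphi}_{51}^T := \bm{\varphi}_{51}^{'T}\bm{A}_3^T$ and $\bm{\varphi}_{52}^T := \bm{\varphi}_{52}^{'T}\bm{A}_3^T$ and use these as the coefficient vectors to read one symbol from the third systematic node and one from the first parity node, respectively.

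Next I would spell out the interference-alignment step. The first two rank-1 equalities in (\ref{Second}) say that the ``interference'' contributions from $\bm{m}_1$ (equivalently $\bm{m}_1''$) in the two parity-style downloads lie in a common 1-dimensional subspace, and likewise for $\bm{m}_2$. Hence a single scalar read from each of the two surviving systematic nodes — chosen with the same coefficient vectors that span those subspaces — can be subtracted to cancel the interference entirely. The third condition, $\mathrm{rank}([\bm{\varphi}_{51}^{'T};\,\bm{\varphi}_{52}^{'T}])=2$, then guarantees that the residual $2\times 2$ matrix acting on the desired vector $\bm{m}_3''=\bm{u}_5$ (the data at the failed second parity node, in view of the definition of $\bm{m}''$) is invertible, so $\bm{u}_5$ is recovered by pre-multiplying with its inverse.

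No new obstacles are anticipated here: the combinatorially substantive work — most notably the use of condition (\ref{limi-j}) to guarantee $c_1 d_2 - c_2 d_1 \neq 0$, and thence the existence of a non-zero $\bm{\varphi}_{52}^{'}$ in the kernel of (\ref{p}) — is already absorbed into Lemma \ref{sec}. The only care needed is bookkeeping: verifying that exactly one symbol is downloaded from each of the four surviving nodes, so the total repair bandwidth is $4$ symbols, which matches the lower bound $\gamma_{\min}$ in (\ref{lower1}). With that accounting in place, optimality of the exact repair of the second parity node follows immediately.
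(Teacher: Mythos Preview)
Your proposal is correct and follows essentially the same approach as the paper: the paper's own proof simply invokes Lemma \ref{sec} to obtain $\bm{\varphi}_{51}^{'}$ and $\bm{\varphi}_{52}^{'}$ satisfying (\ref{Second}) and then notes that downloading one symbol from each surviving node and performing interference alignment yields the optimal repair. Your write-up is more detailed in spelling out the cancellation and bandwidth accounting, but the underlying argument is identical.
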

\begin{proof}
By Lemma \ref{sec}, we have found proper $\bm{\varphi}_{51}^{'}$ and $\bm{\varphi}_{52}^{'}$ satisfying (\ref{Second}). Hence, through downloading one symbol from each surviving node and performing interference alignment, the second parity node can be optimally repaired.
\end{proof}

Fig. 6 illustrates the optimal repair of the second parity node, with the proposed MDS codes constructed over the finite field $\mathbb{F}_4$ in Tab. II. The operations are similar to those with repair of the first parity node.

\subsection{Discussions}

Theorems III.4, III.6, and III.10 together establish that the proposed MDS codes have optimal repair capability; i.e., they can ensure exact repair of every single node failure using minimum bandwidth. With the proposed MDS codes, the node repairs require low-complexity computations. As shown in Fig.2 to Fig.6, the newcomer downloads one linear combination of the $\alpha=2$ symbols from each of the $n-1$ surviving nodes; the relevant computation complexity is ${\cal O}((n-1)\alpha)$. The interference cancellation requires a computational complexity ${\cal O}((k-1)\alpha/2)$. After canceling out all the interference, the desired data can be repaired by multiplying an $\alpha \times \alpha$ matrix, with a computation complexity ${\cal O}(\alpha^{2})$ for repairing systematic nodes, and ${\cal O}(M \alpha)$ for repairing parity nodes. Since the proposed codes have a minimum stripe size $\alpha=2$ and are constructed over $\mathbb{F}_4$, the computational complexity with node repairs is clearly very low.

Recall that the proposed MDS codes can be only used to store a data file of $M=6$ symbols. However, following the simple duplication technique in \cite{Product}, we can employ the proposed codes to store a data file with more symbols. For example, we first construct two DSS (denoted by DSS0 and DSS1) using the proposed MDS codes, and then combine them into a new DSS (denoted by DSS2). The first node of DSS0 contains two symbols denoted by $m_{11}$ and $m_{12}$, and the first node of DSS1 contains two symbols denoted by $m_{11}'$ and $m_{12}'$, then the first node of DSS2 contains four symbols: $m_{11}$, $m_{12}$, $m_{11}'$ and $m_{12}'$. Constructions of other nodes in DSS2 are similar. By this way, we can double the storage capacity of DSS and all the properties remains. Assume two nodes fail in DSS2, we can still retrieve the original file using the surviving nodes. Because the symbols in these nodes are able to retrieve the original file of DSS0 and DSS1 respectively, they can also retrieve the original file of DSS2, which is the combination of files in DSS0 and DSS1. Similarly, optimal node repairs can be also performed after duplication.

%

\section{Conclusions}

We developed a novel construction of $(5,3)$ MDS codes. Constructed over a small finite field $\mathbb{F}_4$, our codes have the minimum stripe size $\alpha = 2$, and ensure optimal repair of every single node failure using minimum bandwidth. It is certainly interesting to extend our approach to construct general $(k+2,k)$ MDS codes with small stripe size and optimal repair capability. This will be pursued in future work.


\end{document}